\theoremstyle{plain}
\newtheorem{thm}{\protect\theoremname}
\theoremstyle{definition}
\newtheorem{defn}{\protect\definitionname}
\theoremstyle{remark}
\newtheorem{rem}{\protect\remarkname}
\theoremstyle{plain}
\theoremstyle{plain}
\newtheorem{lemma}{\protect\lemmaname}
\theoremstyle{plain}
\newtheorem{cor}{\protect\corollaryname}
\providecommand{\corollaryname}{Corollary}
\providecommand{\definitionname}{Definition}
\providecommand{\propositionname}{Proposition}
\providecommand{\remarkname}{Remark}
\providecommand{\theoremname}{Theorem}
\providecommand{\lemmaname}{Lemma}
\newcounter{savesection}
\newcounter{apdxsection}
\renewcommand\appendix{\par
  \setcounter{savesection}{\value{section}}%
  \setcounter{section}{\value{apdxsection}}%
  \setcounter{subsection}{0}%
  \gdef\thesection{\@Alph\c@section}}
\newcommand\unappendix{\par
  \setcounter{apdxsection}{\value{section}}%
  \setcounter{section}{\value{savesection}}%
  \setcounter{subsection}{0}%
  \gdef\thesection{\@arabic\c@section}}
\begin{document}

\title{Recurrent neural networks that\\ generalize from examples and optimize by dreaming}
\author{Miriam Aquaro$^{1}$}
\author{Francesco Alemanno$^{2,6}$}
\author{Ido Kanter$^{3}$}
\author{Fabrizio Durante$^{4}$}
\author{Elena Agliari$^{1}$}
\author{Adriano Barra$^{2,6}$}

\affiliation{$^1$Dipartimento di Matematica, Sapienza Universit\`a di Roma, P.le A. Moro 5, 00185, Rome, Italy.}
\affiliation{$^2$Dipartimento di Matematica e Fisica, Universit\`a del Salento, Campus Ecotekne, via Monteroni, Lecce 73100, Italy.}
\affiliation{$^3$Department of Physics, Bar-Ilan University, Ramat-Gan, 52900, Israel.}
\affiliation{$^4$Dipartimento di Scienze dell’Economia, Università del Salento, centro Ecotekne, I-73100 Lecce, Italy.}
\affiliation{$^6$Istituto Nazionale di Fisica Nucleare, Sezione di Lecce, Campus Ecotekne, via Monteroni, Lecce 73100, Italy.}

\abstract{
The gap between the huge volumes of data needed to train artificial neural networks and the relatively small amount of data needed by their biological counterparts is a central puzzle in machine learning.  Here, inspired by biological information-processing, we introduce a generalized Hopfield network where pairwise couplings between neurons are built according to Hebb's prescription for on-line learning and allow also for (suitably stylized) off-line sleeping mechanisms.
Moreover, in order to retain a learning framework, here the patterns are not assumed to be available, instead, we let the network experience solely a dataset made of a sample of noisy examples for each pattern. 
We analyze the model by statistical-mechanics tools and we obtain a quantitative picture of its capabilities as functions of its control parameters:  the resulting network is an associative memory for pattern recognition that learns from examples on-line, generalizes and optimizes its storage capacity by off-line sleeping.  
Remarkably, the sleeping mechanisms always significantly reduce (up to $\approx 90\%$) the dataset size required to correctly generalize, further, there are memory loads that are prohibitive to Hebbian networks without sleeping (no matter the size and quality of the provided examples), but that are easily handled by the present ``rested'' neural networks. }

\maketitle

\tableofcontents

\section{Introduction} \label{sec:intro}
In the last decades machine learning has pervaded our everyday lives, changing our personal and societal costumes; while a complete comprehension of its underlying mechanisms remains (partially) elusive, the related computational costs are urgently raising the need for optimization protocols and efficient algorithms \cite{Dino}. In this work we aim to provide some hints for contributing towards optimized Artificial Intelligence (OAI), exploiting tools pertaining to the \emph{statistical-mechanics of disordered systems} and looking for inspiration from the mechanisms that make \emph{biological neural networks} particularly effective. In fact, on the one hand the statistical-mechanics framework allows outlining phase diagrams for the system under study (namely plots in the space of the control parameters where different operational modes of the network are represented as regions split by computational phase transitions much as the different phases of water are split in ice, liquid and vapor by physical phase transitions in its phase diagram): in these regards, we aim to detect the natural parameters (e.g., the dataset size and quality) that control the machine performance and to detect regions in the hyperspace of these parameters where the machine works optimally, so to drive the data scientists to design suitable settings \emph{a priori}. 
On the other hand, in the statistical-mechanics framework, Hopfield neural networks (i.e., a paradigmatic model in biological information-processing) and restricted Botlzmann machines (i.e., a paradigmatic model in artificial information-processing) display a formal equivalence (see e.g., \cite{Barra-RBMsPriors1,Monasson,Aurelienne,Giordano,MA-Entropy2021}) 
and we can try to translate the wide knowledge available for the former -- including several variations on theme that get the original model closer to biology and more effective -- into the latter (see e.g., \cite{Amit,ABGGM-PRL2012,ABGGTT-PRL2015,FAB}). 
The fruitfulness of the statistical-mechanics framework is also witnessed by the many significant contributions stemming therefrom since the seminal works by Amit, Gutfreund and Sompolinksy \cite{AGS} (see e.g. \cite{Amit,CKS,angel-learning,sompo-learning,kirkpatrick,RT,Way1,Way2,Way3}). Thus, we expect that such an approach can be used also in this new wave of attention to artificial information-processing networks \cite{JPA-Volume,Carleo}.

The investigations led in this paper are guided by a central question in machine learning: why do artificial neural networks during training require many more examples than biological neural networks in order to form their own representations and thus correctly generalize? Here we show that, by implementing {\em sleeping mechanisms} (a biological essential function) also in artificial neural networks, the minimal size of the training set that guarantees a secure learning diminishes up to approximately $90\%$, with consequent energy savings.
This result is obtained by merging two extensions of the Hopfield paradigm that have been recently developed.\\
In the first extension \cite{FAB,ABF} we let the system undergo (suitably stylized) sleeping mechanisms that allow off-line optimization of information storage. More specifically, in this extension, referred to as ``reinforcement\&removal'' model \cite{FAB}, the standard Hebbian coupling is revised to account for a sleeping time (a scalar parameter) that tunes a pattern-decorrelation matrix: the resulting network is able to Hebbian store patterns on-line and optimize their memory allocation during off-line rearrangement of synapses, in such a way that the number of storable patterns is the largest possible (see also \cite{onde,Crick,Diekerlmann,Neuro1,Neuro2,amigdala2}).  \\
The second extension \cite{Giordano,Ido1} is  a revised version of the Hopfield model, that recasts its \emph{storing} capabilities into \emph{learning} capabilities. In fact, the original Hopfield model does not learn from examples, rather it stores already defined patterns of information, conversely, in machine learning typically we have datasets for training the network and making it able to infer the patterns of information contained in the datasets (e.g., in terms of features). Thus, in \cite{Giordano,Ido1} the Hebbian scenario has been enlarged in order to work with noisy examples of patterns instead of the original patterns; hereafter, to avoid ambiguity, the latter shall be referred to as ``archetypes'', while the former simply as ``examples''. This extension allows studying how Hebbian networks learn and generalize the experienced information, namely how the network properly infers the archetypes out of the supplied examples. 
\newline
As anticipated, we merge these extensions and obtain a model, referred to as ``Learning and Dreaming'' (LaD) model, that is able to form its own representation of the sampled reality, perfectly reconstructing the archetypes if the supplied information is {\em enough} -- where {\em enough} is made explicit in terms of the quality and the size of the supplied dataset -- and that can ``take some rest'' to better re-organize the storage of what has been learnt during its awake activities; as we will show, this last skill is also responsible for savings large volumes of training data.

The model considered here is studied analytically exploiting tools stemming from the statistical mechanics of spin-glasses, retaining a replica symmetric level of approximation (the standard level of resolution for the bulk of works in  neural network's Literature \cite{Amit,CKS}) and relying on Guerra's interpolations techniques \cite{AABF-NN2020,Martino1,Jean,GuerraRS}, which allow for a rigorous mathematical control.

The paper is structured into a main text, where we report the major analytical and computation findings that highlight the rewards in using the present neural network, and an extensive Appendix, where we collect the proof of theorems and technical details. In particular, Sec.~\ref{La2Main} introduces the LaD model along with its investigation: in Sec.~\ref{sec:defs} we give the definitions regarding the model and the methodology, in Sec.~\ref{sec:skills} we report theoretical results about the computational skills of the network, and in Sec.~\ref{sec:experiments} we corroborates our results by numerical experiments. Next, Sec.~\ref{sec:conclusions} contains discussions and outlooks.
Further, Appendix \ref{app:1} is dedicated to the derivation of the model cost function from an information-theory perspective: this alternative approach provides additional robustness to our framework. 
In Appendix \ref{app:2} we deepen how sleeping works in the case of just two patterns: this is a toy example which clarifies neatly how the implemented mechanisms work. Appendices \ref{DeepingEntropies}-\ref{limit} contain additional details about the definition of the control parameters and the statistical-mechanics investigation to work out the phase diagram of the model and assess the maximal storage capacity of the network and the minimal training dataset.

\section{The Learning and Dreaming (LaD) model}\label{La2Main}
In this section we introduce the LaD model and we show how it generalizes from examples deepening its optimal capabilities: in particular, while the standard Hopfield model can store at most $0.14$ bit per neurons, this sleepy version saturates the bound of one byte per neuron further  the minimal dataset size for a secure learning is reduced up to $90\%$ (savings vary between $10\%$ and $90\%$, depending on the quality of the dataset) by the sleeping mechanisms.


As anticipated, our theoretical analysis is based on the statistical mechanics of spin glasses \cite{CKS,Way1,MPV}; this machinery is mathematically rather technical but crystal clear in its steps that can be summarized as follows: 

First, we introduce the cost function $\mathcal{H}$ (or Hamiltonian in physical jargon) of the model (see Definition \ref{Prociutto}), that is the function specifying the network details (e.g., how many neurons and how they interact, how many archetypes, how many examples per archetype), next we introduce its related statistical pressure $\mathcal{A}$ (or free energy in physical jargon\footnote{The free energy is, in physics, the difference,  at given noise level $\beta := T^{-1}$, between the energy  and the entropy. Minimizing the free energy w.r.t. the order parameters, we can simultaneously impose the minimum energy and the maximum entropy principles, and therefore force the system to respect thermodynamics. Note that this modus operandi is not too far from the way data scientists extremize their cost functions under some variational scheme, the maximum entropy principle from the Jaynes inferential perspective being one of them \cite{CKS}.}) of the model (see Definition \ref{def:pressure}), that is defined as the intensive logarithm of the model partition function $\mathcal Z$ (see Definition \ref{def:parfun}), and look for an explicit expression of the statistical pressure in the thermodynamic limit in terms of the order and control parameters previously introduced (see Theorem \ref{Taccitua}).
\newline
The control parameters (see Definition \ref{DefCP})  are variables (e.g., the network's load $\alpha$, the sleeping time $t$, the size $M$ and quality $r$ of the dataset) that we can tune to set the network in different operational regimes that can be inspected by studying the evolution of the order parameters in this space. The latter, the order parameters, are macroscopic observables (e.g., the Mattis magnetization of the archetype, the mean example magnetization, the two-replica overlap) whose values measure the emergent capabilities displayed by the network at the selected values of the control parameters (see Definition \ref{DefOP}) and are thus able to tackle which computation phase the system lies in.
\newline
To study them,  as standard, we extremize the statistical pressure over the order parameters to get a set of self-consistent equations for their evolution in the space of the control parameters (see Corollaries \ref{self},\ref{GroundTruth},\ref{Narcolessia},\ref{exilim} for the various limits under investigation), whose inspection allows to draw the phase diagram of the neural network under study.


\subsection{The LaD neural network: definitions and generalities} \label{sec:defs}

\subsubsection{The LaD neural network: cost function, partition function and statistical pressure}
Let us consider a recurrent, fully-connected network made of $N$ Ising neurons $\sigma_{i} \in \{-1,+1\}, ~ i=(1,\dots,N)$,
$K=\alpha N$ (with $\alpha \in [0,1]$) archetypes $\xi_{i}^{\mu}\in\left\{ -1,1,\right\}, ~ i=(1,\dots,N), ~ \mu=(1,\dots,K)$
and $M$ noisy examples per archetype $\xi_{i}^{\mu}\chi_{i}^{\mu,a}$, $a=(1,\dots,M)$
with $\chi_{i}^{\mu,a}\in\left\{ -1,1\right\} $ a Bernoullian random
variable acting as a noise source and distributed as 
\begin{equation} \label{eq:chi}
\chi_{i}^{\mu,a}=\begin{cases}
+1 & \text{with probability }\text{\ensuremath{ \frac{1+r}{2}}},\\
-1 & \text{\text{with probability }\text{\ensuremath{ \frac{1-r}{2}}}},
\end{cases}
\end{equation}
$\forall i=1,\dots,N,\forall\mu=1,\dots,K$. Note that $r \in (0, 1]$ tunes the quality of the dataset: if $r=1$, each example coincides with the related archetype, i.e. $P(\chi_{i}^{\mu,a}=1)=1$, whereas, if $r=0$, examples are the noisiest possible, i.e., $P(\chi_{i}^{\mu,a}=1)=P(\chi_{i}^{\mu,a}=-1)=\frac{1}{2}$.
\begin{defn}\label{Prociutto} 
The cost function (or Hamiltonian to preserve a physical jargon) of the LaD model is defined as
\begin{equation}
\mathcal{H}_{N,M,K}(\boldsymbol{\sigma}|\boldsymbol{\chi},\boldsymbol{\xi},t)=-\frac{1}{2NM^{2}\Gamma}\sum_{a=1,b=1}^{M,M}\sum_{\mu=1,\nu=1}^{K,K}\sum_{i=1.j=1}^{N,N}\xi_{i}^{\mu}\chi_{i}^{\mu,a}\sigma_{i}\left(\frac{1+t}{1+t\mathcal{C}}\right)_{\mu\nu}\xi_{j}^{\nu}\chi_{j}^{\nu,b}\sigma_{j}\label{ham}
\end{equation}
 where $\Gamma:=r^{2}+\frac{1}{M}(1-r^{2})$ is a normalization factor, $t \in \mathbb{R}^+$ is the sleeping time
and $\mathcal{C}$ is the correlation matrix with entries 
\begin{equation}
\mathcal{C}_{\mu\nu}:=\frac{1}{N}\frac{1}{\Gamma}\left(\frac{1}{M}\sum_{a=1}^{M}\xi_{i}^{\mu}\chi_{i}^{\mu,a}\right)\left(\frac{1}{M}\sum_{b=1}^{M}\xi_{i}^{\nu}\chi_{i}^{\nu,b}\right).
\end{equation}
\end{defn}

\begin{rem}
The standard Hopfield model is recovered when $r=1$ (i.e., archetypes are available) and $t = 0$ (i.e., no sleeping mechanisms are at work and the model is always awake). Analogously, the reinforcement\&removal\footnote{We recall that the reinforcement\&removal model owes its name to the fact that, as the sleeping time increases, the numerator in the kernel $\left(\frac{1+t}{1+t\mathcal{C}}\right)_{\mu\nu}$ plays a role in consolidating pure states, while the denominator allows removing spurious memories; we refer to the original papers  \cite{FAB,ABF} for an in-depth explanation of this mechanism and its relation with actual sleeping and dreaming mechanisms in mammals. } model \cite{FAB,ABF}, where archetypes are available and the correlation matrix was built over archetypes, corresponds to $r=1$ and $t$ finite. We can also recover Kohonen's decorrelation rule (see e.g., \cite{kohonen,Personnaz,KanSo}) meant to diagonalize patterns (and therefore reduce their interference so to improve the network capacity), by setting $r=1$ and $t \to \infty$.
\newline
In the present case, as archetypes are unavailable to the network, the matrix $\mathcal{C}$ is built over examples, however, in the regime of large number of examples $M\gg 1$
	\begin{equation}
		\frac{1}{M}\sum_{a}\chi_{i}^{\mu,a}\xi_{i}^{\mu} ~\approx ~ r\xi_{i}^{\mu},
	\end{equation}
	where we approximated $\frac{1}{M}\sum_{a}\chi_{i}^{\mu,a}$ with the mean of $\chi_{i}^{\mu,a}$, thus, in this limit, the network is effectively decorrelating the archetypes. 
\end{rem}

The introduction of the cost function \eqref{Prociutto} can be justified on different grounds (see e.g., \cite{Amit,CKS}), we refer to Appendix \ref{app:1} for a discussion based on the maximum entropy principle. 

At equilibrium, the probability of finding the system in a configuration $\boldsymbol \sigma$ is ruled by the Boltzmann-Gibbs measure $\propto \exp(-\beta \mathcal{H}_{N,M,K}(\boldsymbol{\sigma}|\boldsymbol{\chi},\boldsymbol{\xi},t))$, and we accordingly introduce
\begin{defn} \label{def:parfun}
The partition function coupled to the Hamiltonian (\ref{ham}) is
defined as
\begin{equation}
\mathcal{Z}_{N,M,K}(\beta|\boldsymbol{\chi},\boldsymbol{\xi},t):=\sum_{\{\sigma\}}^{2^N}e^{-\beta   \mathcal{H}_{N,M,K}(\boldsymbol{\sigma}|\boldsymbol{\chi},\boldsymbol{\xi},t)}=\sum_{\{\sigma\}}^{2^N}\exp\left[\frac{\beta}{2NM^{2}\Gamma}\sum_{a=1,b=1}^{M,M}\sum_{\mu=1,\nu=1}^{K,K}\sum_{i=1,j=1}^{N,N}\xi_{i}^{\mu}\chi_{i}^{\mu,a}\sigma_{i}\left(\frac{1+t}{1+t \mathcal C}\right)_{\mu\nu}\xi_{i}^{\nu}\chi_{j}^{\nu,b}\sigma_{j}\right],\label{parfun}
\end{equation}
where $\beta \in \mathbb{R^+}$ tunes the degree of stochasticity (or thermal noise) in the network: for $\beta \to 0$ (infinite noise limit) the Boltzmann-Gibbs measure becomes uniformly distributed over the neural configurations, while in the opposite limit $\beta \to \infty$ (zero fast noise) the Hamiltonian plays as a Lyapounov function and the probability distribution peaks at its minima. 
\end{defn}
\begin{defn} \label{def:pressure}
The quenched statistical pressure is defined as
\begin{equation} 
\mathcal{A}_{N,M,K}(\beta,t):=\frac{1}{N}\mathbb{E}\log\mathcal{Z}_{N,M,K}(\beta|\boldsymbol{\chi},\boldsymbol{\xi},t)\label{pressure}
\end{equation}	
and, in the thermodynamic limit, posing $\alpha := \lim_{N \to \infty} \frac{K}{N}$,
\begin{equation} 
\mathcal{A}_{M}(\alpha, \beta,t):= \lim_{N\to \infty} \frac{1}{N}\mathbb{E}\log\mathcal{Z}_{N,M,K}(\beta|\boldsymbol{\chi},\boldsymbol{\xi},t),\label{pressure_TDL}
\end{equation}	
where $\mathbb{E}:=\mathbb{E}_{\chi}\mathbb{E_{\xi}}\mathbb{E}_{\lambda}$
being
\begin{eqnarray}
\mathbb{E}_{\xi}f(\xi)&:=&\int_{\mathbb{R}}\prod_{i=1}^{N}\frac{d\xi_{i}^{1}}{2}[\delta(\xi_{i}^{1}+1)+\delta(\xi_{i}^{1}-1)]f(\xi),\\
\mathbb{E}_{\chi}f(\chi)&:=&\prod_{\mu=1}^{K}\prod_{a=1}^{M}\prod_{i=1}^{N}\mathbb{E}_{\chi_{i}^{1,a}}f(\chi),\\
\mathbb{E}_{\chi_{i}^{1,a}}f(\chi)&:=&\int_{\mathbb{R}}d\chi_{i}^{1,a}\left[\frac{1+r}{2}\delta(\chi_{i}^{1,a}-1)+\frac{1-r}{2}\delta(\chi_{i}^{1,a}+1)\right]f(\chi),\\
\mathbb{E}_{\lambda}g(\lambda)&:=&\prod_{i=1}^{N}\prod_{\mu=2}^{K}\mathbb{E}_{\lambda_{i}^{\mu}}g(\lambda),\\
\mathbb{E}_{\lambda_{i}^{\mu}}&:=&\frac{1}{\sqrt{2\pi}}\int_{\mathbb{R}}d\lambda_{i}^{\mu}\exp\left[-\frac{(\lambda_{i}^{\mu})^{2}}{2}\right]g(\lambda)\quad\textrm{with \ensuremath{\mu=2,\dots,K}\text{ and }\ensuremath{i=1,\dots,N.}}
\end{eqnarray}
\end{defn}

\begin{rem}
The introduction of the operator $\mathbb{E}_{\lambda}$ comes from the fact that, when looking for an explicit expression for $\mathcal A$, we will split the argument of the exponential function appearing in Eq.~\ref{parfun} into two contributions: a signal and a noise term. Focusing, without loss of generality, to the retrieval of the first archetype, the signal contribution is built over terms including $\boldsymbol \xi^1$, while the noise contribution is made of terms $\frac{1}{M}\sum_{a=1}^{M}\xi_{i}^{\mu}\chi_{i}^{\mu\neq1,a}$ and, exploiting the {\em universality} property of the quenched noise in spin glasses for $N \to \infty$ \cite{Univ1,Genovese,Agliari-Barattolo}, we can approximate it as follows
\begin{equation}
\frac{1}{M}\sum_{a=1}^M\xi_{i}^{\mu}\chi_{i}^{\mu,a}\sim\lambda_{i}^{\mu}\sqrt{\Gamma}\quad\text{with }\lambda_{i}^{\mu}\sim\mathcal{N}(0,1)\quad\forall\mu=2,\dots,K.
\end{equation}
In other words, the entries of the first archetype are kept digital, whereas all the other archetypes share i.i.d. standard Gaussian entries and are responsible for the quenched noise against the learning, storage and retrieval of $\boldsymbol \xi^1$.   
\end{rem}

\begin{lemma}\label{Zintegrale}
The partition function (\ref{parfun}) can be recast in its integral representation as
\begin{equation}
\mathcal{Z}_{N,M,K}(\beta|\boldsymbol{\chi},\boldsymbol{\xi},t)=\sum_{\{\sigma\}}\int\prod_{\mu=1}^{K}\left(\frac{dz_{\mu}}{\sqrt{2\pi}}\right)\int\prod_{i=1}^{N}\left(\frac{d\phi_{i}}{\sqrt{2\pi}}\right)\exp\left(-\frac{1}{2}\frac{1}{1+t}\sum_{\mu}z_{\mu}^{2}-\frac{1}{2}\sum_{i}\phi_{i}^{2}+\sqrt{\frac{\beta}{\Gamma N}}\frac{1}{M}\sum_{\mu,a,i}\xi_{i}^{\mu}\chi_{i}^{\mu,a}z_{\mu}k_{i}\right).\label{linearz}
\end{equation}
being $k_{i}$ the multi-spin defined as
\begin{equation} \label{eq:multi_s}
k_{i}:=\sigma_{i}+i\sqrt{\frac{t}{\beta(1+t)}}\phi_{i}.
\end{equation}
\end{lemma}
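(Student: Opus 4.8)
The plan is to linearize the quadratic form in the exponent of \eqref{parfun} by two successive Hubbard--Stratonovich (Gaussian-integral) transformations: one to handle the inverse of the kernel $(1+t\mathcal{C})^{-1}$ acting on the pattern index $\mu$, and one to handle the resulting quadratic term in the spins $\sigma_i$. First I would observe that, setting $y_{\mu} := \frac{1}{\sqrt{\Gamma N}}\frac{1}{M}\sum_{a,i}\xi_{i}^{\mu}\chi_{i}^{\mu,a}\sigma_{i}$, the argument of the exponential in \eqref{parfun} is exactly $\frac{\beta}{2}\,\boldsymbol{y}^{\top}(1+t)(1+t\mathcal C)^{-1}\boldsymbol{y}$, because the prefactor $\frac{\beta}{2NM^{2}\Gamma}$ together with the double sum over $i,j,a,b,\mu,\nu$ reproduces precisely this bilinear form. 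I would then apply the standard Gaussian identity $\exp\!\big(\tfrac12\,\boldsymbol{y}^{\top}A\,\boldsymbol{y}\big)=\big(\det A\big)^{-1/2}\!\int\!\prod_{\mu}\frac{dz_{\mu}}{\sqrt{2\pi}}\,\exp\!\big(-\tfrac12\,\boldsymbol{z}^{\top}A^{-1}\boldsymbol{z}+\boldsymbol{z}^{\top}\boldsymbol{y}\big)$ with $A=\beta(1+t)(1+t\mathcal C)^{-1}$, so that $A^{-1}=\frac{1}{\beta(1+t)}(1+t\mathcal C)$; a short computation using the explicit form of $\mathcal C_{\mu\nu}$ shows that the term $\boldsymbol{z}^{\top}A^{-1}\boldsymbol{z}$ reorganizes into $\frac{1}{\beta(1+t)}\sum_{\mu}z_{\mu}^{2}$ plus a term quadratic in $\sum_{i}\phi_i\sigma_i$-type contributions — more precisely, rescaling $z_{\mu}\mapsto z_{\mu}\sqrt{\beta}$ absorbs the stray $\beta$ and leaves the clean $-\frac12\frac{1}{1+t}\sum_{\mu}z_{\mu}^{2}$ seen in \eqref{linearz}, at the cost of generating a cross term proportional to $t\,\mathcal C$.

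The key point is that the $\mathcal C_{\mu\nu}$-piece of $A^{-1}$, once the definition of $\mathcal C$ is substituted, factorizes over the site index: it becomes (up to constants) $\big(\sum_i \xi_i^{\mu}\bar\chi_i^{\mu}\,z_{\mu}\big)\big(\sum_j \xi_j^{\nu}\bar\chi_j^{\nu}\,z_{\nu}\big)$ with $\bar\chi_i^{\mu}:=\frac1M\sum_a\chi_i^{\mu,a}$, i.e. it is a perfect square in the variable $w:=\sum_{\mu,i}\xi_i^{\mu}\bar\chi_i^{\mu}z_{\mu}$. This quadratic $w^2$ term (with the ``wrong'' sign, since it is $+t\mathcal C$ inside the exponent after absorbing the factor) is then linearized by a second Hubbard--Stratonovich transformation, introducing one auxiliary field $\phi_i$ per neuron; the $\phi_i$-integral is Gaussian with weight $-\tfrac12\sum_i\phi_i^2$, and the linear coupling it produces combines with the original $z_{\mu}\sigma_i$ coupling to give exactly $\sqrt{\beta/(\Gamma N)}\,\frac1M\sum_{\mu,a,i}\xi_i^{\mu}\chi_i^{\mu,a}z_{\mu}\big(\sigma_i+i\sqrt{t/(\beta(1+t))}\,\phi_i\big)$, which is the claimed form once we name the bracket $k_i$ as in \eqref{eq:multi_s}. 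The appearance of the imaginary unit $i$ is the tell-tale sign that the $\phi$-transformation linearizes a term of ``repulsive'' sign (the denominator $1+t\mathcal C$ rather than a numerator), and this is expected and harmless since all manipulations are at the level of convergent Gaussian integrals (or their analytic continuation).

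I would finish by checking that the Gaussian determinant prefactors $\big(\det A\big)^{-1/2}$ and the Jacobian of the $z$-rescaling are field-independent constants that either cancel or are immaterial to the partition function up to an overall multiplicative constant, which does not affect the quenched pressure $\mathcal A$ in the thermodynamic limit (and, being $\mu$-diagonal structures, contribute only an $O(N)$-subleading amount to $\log\mathcal Z$ anyway). \textbf{The main obstacle} I anticipate is purely bookkeeping: correctly tracking the normalization $\Gamma$ and the factors of $M$ through both transformations — in particular verifying that the definition $\Gamma=r^2+\frac1M(1-r^2)$ is exactly what makes the $\bar\chi$-fluctuations reorganize into the $\mathcal C$-matrix with unit-scale entries — and making sure the sign and the factor $t/(\beta(1+t))$ inside $k_i$ come out precisely as written rather than with a spurious $\beta$ or $(1+t)$ misplaced. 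There is no conceptual difficulty beyond the two Gaussian integrations; the content of the lemma is exactly that the quartic-looking kernel $(1+t\mathcal C)^{-1}$ admits this two-field linear representation.
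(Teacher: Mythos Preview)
Your approach is correct and coincides with the paper's: two successive Hubbard--Stratonovich transformations, first over the pattern index $\mu$ (introducing the $z_\mu$ with Gaussian weight $\tfrac{1+t\mathcal C}{1+t}$), then over the site index $i$ (introducing the $\phi_i$ with an imaginary coupling to linearize the $\mathcal C$-piece), followed by the repackaging of $\sigma_i$ and $\phi_i$ into the multi-spin $k_i$. Two slips to fix when you write it out: the $\mathcal C$-term is not a single perfect square in $w=\sum_{\mu,i}\xi_i^\mu\bar\chi_i^\mu z_\mu$ but a \emph{sum} of $N$ squares $\sum_i\big(\sum_\mu\xi_i^\mu\bar\chi_i^\mu z_\mu\big)^2$ (which is exactly why, as you correctly say in the next sentence, one $\phi_i$ per neuron is needed); and the determinant $\det\!\big(\tfrac{1+t\mathcal C}{1+t}\big)^{-1/2}$ from the first transformation contributes $O(K)=O(N)$ to $\log\mathcal Z$, not a subleading amount --- but, as you also note, it is $\sigma$-independent and therefore harmless for the quenched pressure, and the paper simply omits it without comment.
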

%
%
%
%
%
\begin{proof}
We have to apply the Hubbard-Stratonovich transformation twice and recursively to linearize the partition function. First we get
\begin{equation}
\begin{split} \mathcal{Z}_{N,M,K}(\beta|\boldsymbol{\chi},\boldsymbol{\xi},t)=&\sum_{\{\sigma\}}\int\prod_{\mu}\left(\frac{dz_{\mu}}{\sqrt{2\pi}}\right)\exp\left[-\frac{1}{2}\sum_{\mu,\nu}z_{\mu}\left(\frac{1+t \mathcal C}{1+t}\right)_{\mu\nu}z_{\nu}+\sqrt{\frac{\beta}{\Gamma N}}\frac{1}{M}\sum_{\mu,a,i}\xi_{i}^{\mu}\chi_{i}^{\mu,a}\sigma_{i}z_{\mu}\right]\\
=&\sum_{\{\sigma\}}\int\prod_{\mu}\left(\frac{dz_{\mu}}{\sqrt{2\pi}}\right)\exp\left(-\frac{1}{2}\frac{1}{1+t}\sum_{\mu}z_{\mu}^{2}-\frac{1}{2}\frac{t}{1+t}\sum_{\mu,\nu}z_{\mu} \mathcal C_{\mu\nu}z_{\nu}+\sqrt{\frac{\beta}{N\Gamma}}\frac{1}{M}\sum_{\mu,a,i}\xi_{i}^{\mu}\chi_{i}^{\mu,a}\sigma_{i}z_{\mu}\right)
\end{split}
\end{equation}
and, by applying again the Hubbard-Stratonovich transformation to the last expression, we finally we get
\begin{equation}\label{Integralista}
\begin{split}  \mathcal{Z}_{N,M,K}(\beta|\boldsymbol{\chi},\boldsymbol{\xi},t)= &\sum_{\{\sigma\}}\int\prod_{\mu}\left(\frac{dz_{\mu}}{\sqrt{2\pi}}\right)\int\prod_{i}\left(\frac{d\phi_{i}}{\sqrt{2\pi}}\right)\exp\left(-\frac{1}{2}\frac{1}{1+t}\sum_{\mu}z_{\mu}^{2}-\frac{1}{2}\sum_{i}\phi_{i}^{2}\right)\cdot\\
 & \cdot \exp\left(+i\sqrt{\frac{t}{1+t}}\sqrt{\frac{1}{N\Gamma}}\frac{1}{M}\sum_{i,a,\mu}\xi_{i}^{\mu}\chi_{i}^{a,\mu}\phi_{i}z_{\mu}+\sqrt{\frac{\beta}{N\Gamma}}\frac{1}{M}\sum_{\mu,a,i}\xi_{i}^{\mu}\chi_{i}^{\mu,a}\sigma_{i}z_{\mu}\right).
\end{split}
\end{equation}
Finally, the multi-spin $k_{i}$ (\ref{eq:multi_s}) is introduced in such a way that 
the expression \eqref{Integralista} becomes that appearing in the statement. 
\end{proof}

It is worth noticing that, if we focus on the exponential argument of the integral representation of the partition function \eqref{linearz},  we can recognize the cost function of a three-layer restricted Boltzmann machine where the intermediate party (or {\em hidden layer} to keep a machine learning jargon) is made of real neurons $\{ z_{\mu}\}_{\mu=1,...,P}$ with $z_{\mu} \sim \mathcal{N}[0,1+t],  \forall \mu$, while the external layers are made, respectively, of a set of Boolean neurons $\{\sigma_i \}_{i=1,...,N}$ (the {\em visible layer}, where examples are presented) and of a set of imaginary neurons with magnitude $\{ \phi \}_{i=1,...,N}$, being $\phi_{i} \sim \mathcal{N}[0,1],  \forall i$ (the {\em spectral layer}). 
The capabilities of these ``sleeping Boltzmann machines'' have been recently investigated and they are shown to outperform the standard machines, much as like the ``sleeping Hopfield network'' outperforms the standard Hopfield network \cite{AlbertIEEE}.

\subsubsection{The LaD neural network: control parameters and order parameters}\label{La2.1}

For the sake of clearness, we recap the complete list of the network parameters, of the control parameters, and of the order parameters.

The LaD neural network is characterized by the following parameters:
\begin{itemize}
\item $N$ is the amount of neurons the network is built of, also referred to as the network size,
\item $K$ is the amount of archetypes to learn and (possibly) retrieve,
\item $M$ is the amount of examples provided per archetype. 
\end{itemize}
Therefore, the supplied dataset is made overall of $K \times M \times N$ bits.
\begin{defn}\label{DefCP}\label{def_control} 
The control parameters whose tuning can affect the network performance are
\begin{itemize}
\item $\beta$ that tunes the (fast) noise,
\item $\alpha = K/N$ that measures the network load, and is often called ``slow noise'' as it tunes the noise due to the interference among stored patterns,
\item $r$ that assesses the dataset quality,
\item $t$ that is the sleeping time, responsible for pattern decorrelation,
\item $\rho = (1-r^2)/(M r^2)$ that quantifies our ignorance on the archetype dataset, given the example dataset, and we will call it the dataset \emph{entropy} (vide infra). 
\end{itemize}
\end{defn} 
\begin{rem} It is instructive to see why $\rho$ is an entropy: intuitively, for $M\gg 1$, we can approximate the mean of the examples referred to $\xi_{i}^{\mu}$
as   
\begin{equation}
\frac{1}{M}\sum_{a=1}^{M}\xi_{i}^{\mu}\chi_{i}^{\mu,a}\sim\sqrt{\Gamma}X\text{\,\,with\,\,}X\sim\mathcal{N}(0,1),
\end{equation}
such that the Shannon differential entropy $\tilde{\mathcal{S}}$ of the variable $\sqrt{\Gamma}X$ is 
\begin{equation}
\tilde{\mathcal{S}}(\sqrt{\Gamma}X)=\ln(\sqrt{2\pi\Gamma})+\frac{1}{2}=\frac{1}{2}\ln\left[2\pi r^{2}\left(1+\rho\right)\right]+\frac{1}{2};
\end{equation}
on the other hand, the differential entropy of a perfect dataset, corresponding to setting $M \to +\infty$ in such a way that $\rho\to0$, is
\begin{equation}
	\lim _{\rho\to 0} \tilde{\mathcal{S}}(\sqrt{\Gamma}X)=\frac{1}{2}\ln(2\pi r^{2})+\frac{1}{2}
\end{equation}
then, by evaluating the difference between these expressions we get
\begin{equation}
\Delta \tilde{\mathcal{S}} =\frac{1}{2}\log\left(1+\rho\right)
\end{equation}
which is a measure of the network's ignorance on archtypes. Note that, when $r\to 0$, $\Delta \tilde{\mathcal{S}}$ remains finite only if $M$ increases as $\frac{1}{r^{2}}$
and it vanishes if $r=1$.
\newline
An alternative explanation on the role of $\rho$, via Hoeffding's inequality, is provided in the Appendix \ref{DeepingEntropies}.
\end{rem}
Let us now move to order parameters, namely macroscopic observables able to effectively detect in which regime the system lies without paying attention to the countless microscopic variables \footnote{For instance, in the more familiar case of water, the natural order parameter is the molecular density as it changes from being quite low for the gaseous, rarefied, scenario, to high values for densely packed snowflakes. Then, looking at its evolution as temperature, pressure and volume are made to vary, we can detect if the molecules of water overall are arranged in a solid, liquid or gaseous configuration, without taking care of the microscopic description of the evolution of the single molecules.} (e.g., the neural activities of the single neurons). In theoretical AI we need several  order parameters and, for the network under study, these are given in the next definition.
\begin{defn}\label{DefOP}
\label{def_ord}We introduce a set of order parameters whose inspection will help us to characterize network's emerging capabilities as the control parameters are made to vary:
\begin{itemize}
\item $m := \frac{1}{N}\sum_{i}\xi_{i}^{1}\sigma_{i}$ is the standard Mattis magnetization that quantifies how the network recognizes (and thus has correctly learnt) the archetype, in fact, for $m \to 1$ learning has been accomplished and the retrieval of the archetype hidden behind the noisy examples is possible.

\item $\mu :=\frac{1}{r(1+\rho)}\frac{1}{NM}\sum_{a,i}\xi_{i}^{1}\chi_{i}^{1,a}\sigma_{i}$ which is proportional to the mean magnetization for the examples related to $\boldsymbol \xi^1$ and that for the sake of simplicity we will refer to as magnetization of the example.
\item 
 $\eta := \frac{1}{r(1+\rho)}\frac{1}{NM}\sum_{a,i}\xi_{i}^{1}\chi_{i}^{1,a}k_{i}$ which is the mean magnetization for the examples measured over the multi-spin $k$ (it is a redundant order parameter, introduced solely for mathematical convenience).

\item $q_{12} := \frac{1}{N}\sum_{i}^{N}k_{i}^{(1)}k_{i}^{(2)}$ is the two-replica overlap for the multi-spin configurations (see eq.~\ref{Integralista}) and quantifies the degree of {\em glassiness}, namely the complexity of the organization of the statistical pressure valleys (employed to allocate memory) \cite{Amit,CKS,MPV}.

\item $p_{12} := \frac{1}{K}\sum_{\mu>1}z_{\mu}^{(1)}z_{\mu}^{(2)}$ is the two-replica overlap for the Gaussian neuron configurations (see eq.~\ref{Integralista}) and quantifies as well the degree of glassiness in the network\footnote{While $q_{12}$ is concerned with ``hard glassiness'', $p_{12}$ is concerned with ``soft glassiness'' ,  but we will not deepen this technical aspect of glassy statistical mechanics here, for more details we refer to e.g., \cite{Alessandrelli,HowGlassy,Bovier}}.
\end{itemize} 
\end{defn}
\subsection{The LaD neural network: replica symmetric scenario}\label{sec:skills} 
As stressed in Sec.~\ref{sec:intro}, the phase diagram constitutes a precious information for machine learning developers as it allows setting the network in the desired operational regime {\em a priori}, hence avoiding energy consumption for preliminary assessments. For instance, a glance at the Hopfield phase diagram (see Figure \ref{Miriam1}, left panel, blue line) immediately reveals that it is not possible to use that network for retrieving at $\alpha > \alpha_c \approx  0.138$ as, beyond that threshold, the network escapes the retrieval region and enters the spin-glass region where computational capabilities get lost. Recalling that $\alpha=K/N$, this means that if we have a Hopfield network made of, say, $N=1000$ neurons, it is pointless to make it handle, say, $K = 500$ patterns, as this would imply a value of $\alpha = 0.5 \gg \alpha_c$ and we know in advance that the network would surely fail. 
\newline
In the following we derive the phase diagram for the LaD model and we anticipate that it is able to manage much higher loads than the standard Hopfield model, reaching a critical threshold $\alpha_c=1$ for large enough sleeping times $t$.  

Before proceeding, it is worth making explicit a couple of assumptions underlying our investigations. 
\newline
First, in order to split the space of the control parameters into different regions characterized by qualitatively different emergent behaviors of the system, we need an analytical description of the phase transitions and this is achievable only in the infinite volume limit $N \to \infty$; therefore, the following theory shall be developed in that limit, which is also mathematically convenient as it allows us to neglect finite-size fluctuations.
%
\newline
Moreover, in the following calculations we will rely on the replica-symmetry scheme, namely, we assume that the order parameters introduced in Definition (\ref{def_ord}), in the thermodynamic limit $N\to\infty$, self-average around their Boltzmann-Gibbs expectation values, that shall be indicated by a bar. Thus, denoting with $\mathcal{P}(x)$ the distribution of an arbitrary order parameter $x$, we have
\begin{equation} 
\lim_{N \to \infty}\mathcal{P}(x)=\delta(x-\bar{x}).
\end{equation}
This assumption is standard when approaching neural networks analytically \cite{Amit,CKS}.
\newline
Finally, we notice that, in this particular generalization of the Hopfield paradigm, beyond network's volume, we have to deal also with the dataset volume, hence we can consider the large-but-finite dataset-size scenario $M \gg1$ and the infinite dataset-size limit $M \to \infty$.

Under these remarks, the phase diagram of the LaD model is obtained by numerically solving the self-consistencies obtained by extremizing the infinite-volume statistical pressure $\mathcal{A}_M(\alpha, \beta, t)$ w.r.t. the order parameters in the large dataset case ($M \gg 1$); these are reported in Corollary \ref{self}, in Corollary \ref{GroundTruth} for the ground state ($\beta \to \infty$), in Corollary \ref{Narcolessia} for the infinite sleeping time limit ($t\to \infty$) and finally in Corollary \ref{exilim} both for the infinite sleeping time limit and for the ground state; all of them are presented in the next Sec.~\ref{La2.2}. Results and performances of the network resulting from inspecting these self-consistencies are reported in the plots presented in the following Sects.~\ref{La2.3}-\ref{La2.4}. 

\subsubsection{The LaD neural network: asymptotic behavior in the thermodynamic limit}\label{La2.2}
Hereafter we give the explicit expression of the quenched pressure of the LaD model, in the thermodynamic limit and large-but-finite dataset size $M$, in terms of the control and order parameters. Since, in the thermodynamic limit, the quenched pressure depends on $M$ only through $\rho$ from now on we will write $\mathcal{A}(\alpha,\beta,t,\rho)$.

\begin{thm}\label{Taccitua}
In the infinite volume limit $N\rightarrow\infty$ and large dataset scenario ($M\gg1$) , the quenched replica symmetric statistical pressure of the LaD model is given by the following expression (to be extremized w.r.t. the order parameters):  
\begin{equation}
\begin{split}\label{presro}  \mathcal{A}(\alpha,\beta,t,\rho)=&\frac{\alpha}{2}\left[\log(1+t) + \beta\bar{q}\frac{(1+t)}{1-(1+t)\beta(\bar{Q}-\bar{q})}-\log\left[1-(1+t)\beta(\bar{Q}-\bar{q})\right]\right]+\log \frac{2}{\sqrt{\bar D}} \\
 & +\mathbb{E}_{\psi}\log\cosh\left[\frac{1}{\bar D}\sqrt{\alpha\beta\bar{p}+\left(\frac{\beta\bar\mu}{\bar D}\frac{1+t}{\bar D+t}\right)^{2}\rho}\psi+\frac{\beta\bar\mu}{\bar D}\frac{1+t}{\bar D+t}\right]+\frac{1}{2}\bar{p}\beta\alpha(\bar{q}-\bar{Q})\\
 & -\frac{\beta}{2}\frac{(1-\bar D)}{\bar D}\frac{1+t}{t}-\frac{\alpha\bar{p}}{2 \bar D}\frac{t}{(1+t)}-\frac{1}{2}\beta(\bar D-1)\frac{1+t}{t}\bar{Q}-\frac{\beta}{2 \bar D}(1+t)\frac{\bar\mu^{2}}{(\bar D+t)}(1+\rho).
\end{split}
\end{equation}
where the operator $\mathbb{E}_{\psi}$ is given by
\begin{equation}
\mathbb{E}_{\psi}g(\psi)=\frac{1}{\sqrt{2\pi}}\int_{\mathbb R}\exp\left(-\frac{\psi^{2}}{2}\right)g(\psi);
\end{equation}
$\bar{\mu}$ is the expectation of the Mattis magnetization $\mu$ in the thermodynamic limit, $\bar{p}$ and $\bar{q}$
are, respectively, the expectations of the overlaps $p_{12},q_{12}$ in the thermodynamic
limit, $\bar{Q}$ is the expectation of the diagonal overlap $q_{11}$
in the thermodynamic limit and  $\bar{D}$ is related to $\bar{P}$, that is the expectation of the other diagonal overlap $p_{11}$, via its definition
\begin{equation}
\bar{D}:=1+\alpha\frac{t}{(1+t)}(\bar{P}-\bar{p}).
\end{equation}
\end{thm}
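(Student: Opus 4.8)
The plan is to compute the quenched statistical pressure $\mathcal{A}(\alpha,\beta,t,\rho)$ starting from the integral representation of the partition function given in Lemma \ref{Zintegrale}, and to carry out the computation along the lines of Guerra's interpolation scheme adapted to this multi-layer structure. First I would separate the contribution of the first (signal) archetype from the bulk of the $K-1$ non-condensed archetypes: as remarked after Definition \ref{def:pressure}, the noise terms $\tfrac{1}{M}\sum_a \xi_i^\mu \chi_i^{\mu,a}$ for $\mu\geq 2$ may be replaced by i.i.d.\ Gaussians $\lambda_i^\mu\sqrt{\Gamma}$ by universality, so that the $z_\mu$ integrals for $\mu\geq2$ become genuinely Gaussian. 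Performing those Gaussian integrals explicitly produces a quadratic form in the multi-spins $k_i$ mediated by the matrix $\big(\mathbb{1}-\tfrac{\beta}{N}(1+t)\,k\otimes k\big)^{-1}$ type object, which is the standard AGS-style quenched noise term; its treatment requires introducing the replica overlaps $q_{12}=\tfrac1N\sum_i k_i^{(1)}k_i^{(2)}$, $p_{12}=\tfrac1K\sum_{\mu>1}z_\mu^{(1)}z_\mu^{(2)}$ and their diagonal counterparts $\bar Q$, $\bar P$ (hence $\bar D$).

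Concretely, I would build a one-parameter interpolating pressure $\mathcal{A}(s)$ with $s\in[0,1]$, where at $s=1$ one recovers the true model and at $s=0$ the system factorizes into a one-body problem plus Gaussian reservoirs. Differentiating in $s$ and imposing replica symmetry — i.e.\ $\mathcal{P}(x)\to\delta(x-\bar x)$ for each order parameter, as stated in Sec.~\ref{sec:skills} — makes the $s$-derivative collapse to a sum of terms that are each products of expectations; integrating back from $0$ to $1$ yields the main text's expression. The one-body residual problem, after linearizing via the auxiliary Gaussian field $\psi$, gives the $\mathbb{E}_\psi\log\cosh[\dots]$ term, with the effective local field built from the signal magnetization $\bar\mu$ (through the combination $\tfrac{1+t}{\bar D+t}$ coming from integrating out the spectral neurons $\phi_i$ entangled with $z_1$) and from the slow-noise variance $\alpha\beta\bar p$ plus the finite-$M$ correction $\propto\rho$. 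The logarithmic terms $\tfrac{\alpha}{2}\log(1+t)$ and $-\tfrac{\alpha}{2}\log[1-(1+t)\beta(\bar Q-\bar q)]$ are the Gaussian normalizations from the $z_\mu$ and spectral integrations, and the remaining algebraic pieces ($-\tfrac{\beta}{2}\tfrac{(1-\bar D)}{\bar D}\tfrac{1+t}{t}$, the $\bar Q$- and $\bar\mu^2$-terms, etc.) are the "subtracted" self-overlap contributions that make the interpolation telescope correctly; the $(1+\rho)$ factor multiplying $\bar\mu^2$ and the $\rho$ under the square root are precisely where the large-but-finite dataset size $M\gg1$ enters, via $\Gamma = r^2(1+\rho)$ and the entropy parameter $\rho=(1-r^2)/(Mr^2)$.

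I expect the main obstacle to be the careful bookkeeping of the coupled $(z_\mu,\phi_i,\sigma_i)$ Gaussian integrations: the spectral neurons $\phi_i$ are entangled both with the signal variable $z_1$ and with the non-condensed $z_{\mu\geq2}$, so the matrix whose determinant produces the $\log[1-(1+t)\beta(\bar Q-\bar q)]$ term and the effective field renormalization $1/\bar D$ must be diagonalized consistently across the condensed and non-condensed sectors without double-counting. Equivalently, one must verify that the definition $\bar D := 1 + \alpha\,\tfrac{t}{1+t}(\bar P-\bar p)$ closes the self-consistency loop — it appears both in the denominators of the one-body term and as the argument of $\log(2/\sqrt{\bar D})$. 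A secondary technical point is justifying the universality replacement uniformly in $M$, i.e.\ checking that the $M\gg1$ Gaussian approximation for the empirical example-means is legitimate at the level of the pressure (this is where the references \cite{Univ1,Genovese,Agliari-Barattolo} and the Hoeffding argument of Appendix \ref{DeepingEntropies} do the work). Once the interpolation is set up and these linear-algebra identities are in place, integrating the $s$-derivative and collecting terms is routine, and extremizing over $\{\bar m,\bar\mu,\bar q,\bar p,\bar Q,\bar P\}$ then delivers the self-consistency equations quoted in the subsequent corollaries.
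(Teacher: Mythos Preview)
Your proposal is correct and follows essentially the same route as the paper: Guerra's interpolation on an auxiliary pressure $\mathcal{A}(s)$ that at $s=1$ reproduces the model and at $s=0$ factorizes into a one-body problem, with the $s$-derivative collapsing under replica symmetry to products of expectations (the paper fixes auxiliary constants $B,C,D,E,F$ so that only variances survive, which then vanish as $N\to\infty$), and the one-body piece yielding the $\mathbb{E}_\psi\log\cosh$ term after integrating $\phi_i$ and using the large-$M$ Gaussian approximation for $\tfrac{1}{M}\sum_a\chi^{1,a}$. The only minor deviation is that the paper integrates out $z_1$ \emph{before} setting up the interpolation (to isolate the signal as a deterministic $\eta^2$ term) rather than carrying it through the $z$-sector, but this is a bookkeeping choice and your identification of the $\bar D$ closure and the $\rho$-dependence through $\Gamma=r^2(1+\rho)$ is exactly what the paper does.
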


\begin{proof}
The proof is based on the application of Guerra's interpolation scheme, as detailed in Appendix (\ref{gue}).
\end{proof}
\begin{cor}
\label{self}\label{equazioniperfase} The expectation values of the order parameters of the LaD model in the thermodynamic limit ($N\to\infty$) and large dataset scenario ($M\gg1$) fulfil the following set of coupled self-consistent equations
\begin{eqnarray}
\bar{D}&=&1+\frac{\alpha t}{1-(1+t)\beta(\bar{Q}-\bar{q})},\\
\bar{p}&=&\frac{(1+t)^{2}\beta\bar{q}}{\left[1-(1+t)\beta(\bar{Q}-\bar{q})\right]^{2}},\\
\bar{m}&=&\mathbb{E}_{\psi}\tanh\left[\frac{1}{\bar{D}}\sqrt{\alpha\beta\bar{p}+\left(\frac{\beta\bar\mu}{\bar{D}}\frac{1+t}{\bar{D}+t}\right)^{2}\rho}\psi+\frac{\beta\bar\mu}{\bar{D}}\frac{1+t}{\bar{D}+t}\right],\\
\bar{D}^{2}\bar{Q}&=&1-\frac{1}{\beta}\frac{t\bar{D}}{1+t}-t\bar\mu^{2}(1+\rho)\frac{2\bar{D}+t}{(\bar{D}+t)^{2}}+\frac{\alpha\bar{p}}{\beta}\frac{t^{2}}{(1+t)^{2}}-\left(1-\hat{q}\right)\frac{2t}{\bar{D}(1+t)}\alpha\bar{p},\\
\bar{D}^{2}(\bar{Q}-\bar{q})&=&1-\hat{q}-\frac{1}{\beta}\frac{t\bar{D}}{(1+t)},\\
\bar\mu(1+\rho)&=&\bar{m}+(1-\hat{q})\frac{1+t}{\bar{D}+t}\beta\bar\mu\frac{\rho}{\bar{D}},
\end{eqnarray}
where to lighten the notation we introduce
\begin{equation}
\hat{q}:=\mathbb{E}_{\psi}\tanh^{2}\left(\frac{1}{\bar{D}}\sqrt{\alpha\beta\bar{p}+\left(\frac{\beta\bar\mu}{\bar{D}}\frac{1+t}{\bar{D}+t}\right)^{2}\rho}\psi+\frac{\beta\bar\mu}{\bar{D}}\frac{1+t}{\bar{D}+t}\right).
\end{equation}
\end{cor}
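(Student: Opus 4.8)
The statement is a corollary of Theorem~\ref{Taccitua}: under the replica-symmetric ansatz the expectations $\bar\mu,\bar p,\bar q,\bar Q,\bar P$ (hence $\bar D$, which by definition equals $1+\alpha\frac{t}{1+t}(\bar P-\bar p)$) form the stationary point of the pressure functional \eqref{presro}, so the plan is simply to impose $\partial_{\bar x}\mathcal A=0$ for each free order parameter $\bar x\in\{\bar\mu,\bar p,\bar q,\bar Q,\bar P\}$ and read off the listed equations. Before differentiating I would fix two recurrent pieces of notation: $L:=1-(1+t)\beta(\bar Q-\bar q)$ for the denominator of the first bracket in \eqref{presro}, and $g(\psi)$ for the entire argument of the $\cosh$, so that $\bar m:=\mathbb E_\psi\tanh g$ and $\hat q:=\mathbb E_\psi\tanh^2 g$. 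With this convention the third self-consistency (the one for $\bar m$) is not a stationarity condition at all but merely the name given to the $\mathbb E_\psi\tanh$ already sitting inside the $\cosh$-term, and likewise for the auxiliary $\hat q$.

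Two structural points make the computation go through cleanly. First, the only non-algebraic ingredient is Gaussian integration by parts: since $\psi\sim\mathcal N(0,1)$ one has $\mathbb E_\psi[\psi\,h(\psi)]=\mathbb E_\psi[h'(\psi)]$, and with $h=\tanh g$ this gives $\mathbb E_\psi[\psi\tanh g]=v\,\mathbb E_\psi[1-\tanh^2 g]=v(1-\hat q)$, where $v$ is the coefficient of $\psi$ in $g$; this is the mechanism that turns every $\psi$-linear piece produced by differentiating the $\cosh$-term into the $(1-\hat q)$-contributions on the right-hand sides of the equations for $\bar D^2\bar Q$, $\bar D^2(\bar Q-\bar q)$ and $\bar\mu$. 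Second, $\bar D$ is not itself an independent order parameter but shorthand for $1+\alpha\frac{t}{1+t}(\bar P-\bar p)$; the tidy way to organise things is to promote $\bar D$ to an independent variable whose conjugate is $\bar P$, so that the $\bar P$-stationarity forces $\partial_{\bar D}\mathcal A|_{\mathrm{expl}}=0$ and then every remaining $\bar D$-dependence (in particular the one carried implicitly by $\bar p$) may be dropped when differentiating in the other variables.

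Then I would carry out the five derivatives one at a time. (i) $\partial_{\bar q}\mathcal A=0$: the $\bar q$-dependence lives only in the first bracket (via $L$ and the explicit $\bar q$) and in $\frac12\bar p\beta\alpha(\bar q-\bar Q)$; the two $1/L$ pieces cancel and one is left with $\bar p=(1+t)^2\beta\bar q/L^2$, the equation for $\bar p$. (ii) $\partial_{\bar Q}\mathcal A=0$: the same terms plus the $\bar Q$ in $-\frac12\beta(\bar D-1)\frac{1+t}{t}\bar Q$; substituting the equation for $\bar p$ to eliminate the $\bar q$-term, the survivors give $\bar D-1=\alpha t/L$, i.e.\ the equation for $\bar D$. (iii) $\partial_{\bar p}\mathcal A=0$: only the explicit $\bar p$-dependence matters (the $\bar D$-part vanished by the $\bar P$-stationarity); the $\cosh$-term contributes $\frac{\alpha\beta}{2\bar D^2}(1-\hat q)$ via the integration-by-parts identity, and collecting it with the two algebraic $\bar p$-terms yields $\bar D^2(\bar Q-\bar q)=1-\hat q-\frac{t\bar D}{\beta(1+t)}$. (iv) $\partial_{\bar P}\mathcal A=0$, equivalently $\partial_{\bar D}\mathcal A|_{\mathrm{expl}}=0$: here $\bar D$ enters $\log(2/\sqrt{\bar D})$, the whole of $g(\psi)$ (both through $v$ and through the constant shift), and the last four algebraic terms; differentiating, applying the integration-by-parts identity once more, and collecting powers of $\bar D$ and $(\bar D+t)$ gives the equation for $\bar D^2\bar Q$. (v) $\partial_{\bar\mu}\mathcal A=0$: the $\cosh$-term yields a piece proportional to $\bar m$ plus a $\rho(1-\hat q)$-piece from the $\bar\mu$ inside the square root, and balancing against $\partial_{\bar\mu}$ of the last term of \eqref{presro} produces the equation for $\bar\mu(1+\rho)$.

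The main obstacle is step (iv): $\bar D$ appears in essentially every term of \eqref{presro}, and nonlinearly inside $g(\psi)$, so one must track all the $\bar D$, $(\bar D+t)$ and $(1+t)$ factors and recognise which cancel outright and which cancel only after feeding back the already-established equations for $\bar D$, $\bar p$ and $\bar D^2(\bar Q-\bar q)$. With the Gaussian integration-by-parts identity in hand and the $\bar D$-versus-$\bar P$ bookkeeping settled as above, everything else is routine algebra, and the six displayed relations (together with the definition of $\hat q$) follow.
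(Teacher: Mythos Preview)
Your proposal is correct and follows essentially the same route as the paper: impose $\nabla\mathcal A|_{\bar\mu,\bar p,\bar q,\bar Q,\bar D}=0$ on the pressure of Theorem~\ref{Taccitua}, use Gaussian integration by parts on the $\psi$-averages, and read off the stationarity conditions. The only cosmetic differences are that the paper extremizes directly in $\bar D$ (treated as an independent variable from the outset) rather than passing through $\bar P$, and that the paper obtains the $\bar m$-equation by carrying an explicit source term $J_m$ in the partition function and computing $\partial_{J_m}\mathcal A\big|_{J_m=0}$, which is what rigorously identifies $\mathbb E_\psi\tanh g$ with the physical archetype magnetization rather than leaving it as a mere naming convention.
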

\begin{proof}
The proof works by imposing $\left.\bigtriangledown\mathcal{A}(\alpha,\beta,t,\rho)\right|_{\bar{\mu},\bar{p},\bar{q},\bar{Q},\bar{D}}=0$ and by direct evaluation of the derivatives. The self-consistent equation for the  magnetization of the archetype $\bar m$ has been computed by introducing a source term $J_{m}$  into the partition function  as detailed in Appendix (\ref{selfc}).
\end{proof}
%
%
%

%

\begin{cor}\label{GroundTruth}
The expectation values of the order parameters of the LaD model in the thermodynamic limit ($N\rightarrow\infty$) and in the large dataset ($M\gg1$) and fast-noiseless ($\beta \to \infty$) scenario, fulfil the following set of self-consistent equations
\begin{equation}\begin{split}
\bar{D}&=1+\frac{\alpha t\bar{D}}{\bar{D}-\delta_{1}(1+t)+t},\\
\bar{m}&=\mathrm{erf} \left(\frac{\bar\mu}{G}\right),
 \\ \left(\frac{\bar{D}+t}{1+t}-\delta_{1}\right)^{2}&=\Pi^{2}-\bar\mu^{2}(1+\rho)\left[1-\left(\frac{\bar{D}}{\bar{D}+t}\right)^{2}\right]+\alpha\frac{t^{2}}{(1+t)^{2}}-\frac{2t\delta_{1}}{(1+t)}\alpha,\\
 \bar\mu&=\frac{\bar{m}\Pi}{1+\rho\left(1-\delta_{1}\frac{1+t}{\bar{D}+t}\right)},
\\ \delta_{1}&=\sqrt{\frac{4}{\pi}}\exp\left(-\frac{\bar\mu^{2}}{G^{2}}\right)\frac{\Pi}{G},
\end{split}
\end{equation}
where
\begin{equation}\begin{split}
&\Pi:=\sqrt{\frac{\beta}{\bar p}},\\& G:=\sqrt{2}\sqrt{\alpha\frac{(\bar{D}+t)^{2}}{(1+t)^{2}}+\frac{\bar\mu^{2}}{\bar{D}^{2}}\rho},\\& \delta_{1}:=\beta(1-\hat q).
\end{split}
\end{equation}
\end{cor}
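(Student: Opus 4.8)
The plan is to derive the stated system as the zero fast-noise limit $\beta\to\infty$ of the finite-temperature self-consistencies of Corollary~\ref{self}, using the standard zero-temperature scaling ansatz for the order parameters. Concretely, I would posit — and verify a posteriori for consistency — that as $\beta\to\infty$ the overlaps freeze, $\bar q,\bar Q$ tending to a common finite value, while the susceptibility-type combinations stay finite: $(1+t)\beta(\bar Q-\bar q)$ converges, $\beta(1-\hat q)\to\delta_1<\infty$, and $\bar p\sim\beta$ so that $\Pi:=\sqrt{\beta/\bar p}$ is finite, whereas $\bar m,\bar\mu,\bar D$ remain $O(1)$. One then substitutes this ansatz into each of the six equations of Corollary~\ref{self} and keeps the surviving terms; the quantities $\delta_1:=\beta(1-\hat q)$, $\Pi:=\sqrt{\beta/\bar p}$ and $G:=\sqrt2\sqrt{\alpha(\bar D+t)^2/(1+t)^2+\bar\mu^2\rho/\bar D^2}$ appearing in the statement are precisely the finite combinations produced in this process.

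Carrying this out: multiplying the equation for $\bar D^2(\bar Q-\bar q)$ by $\beta$ expresses $\beta(\bar Q-\bar q)$ through $\delta_1$ and $\bar D$; feeding this into the equation for $\bar D$ and simplifying gives $\bar D=1+\alpha t\bar D/(\bar D-\delta_1(1+t)+t)$. The equation for $\bar p$ then pins $\Pi$ (equivalently $\bar q$) in terms of $\bar D$ and $\delta_1$. For the magnetization-type equations the argument $A\psi+B$ of $\tanh$ under $\mathbb E_\psi$ grows linearly in $\beta$, so $\tanh(A\psi+B)\to\mathrm{sgn}(A\psi+B)$ and $\mathbb E_\psi\,\mathrm{sgn}(A\psi+B)=\mathrm{erf}\!\left(B/(\sqrt2|A|)\right)$ produces $\bar m=\mathrm{erf}(\bar\mu/G)$; likewise $\tanh^2(A\psi+B)\to1$ pointwise, and since $1-\tanh^2=\mathrm{sech}^2$ is a boundary layer of width $1/A$ about the zero $\psi_0=-B/A$ of the argument, a Laplace estimate gives $\delta_1=\lim\beta(1-\hat q)=\sqrt{4/\pi}\,e^{-\bar\mu^2/G^2}\,\Pi/G$ (the boundary-layer weight $2/A$ times the Gaussian density at $\psi_0$, in the finite variables). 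Finally the $\bar\mu$ equation turns into the stated linear relation once $\beta(1-\hat q)$ is replaced by $\delta_1$, and the $\bar Q$ equation, after inserting the frozen value of $\bar Q$ together with the finite $\beta\bar p$ and $\beta(1-\hat q)$, collapses to the stated quadratic constraint linking $(\bar D+t)/(1+t)-\delta_1$, $\Pi$, $\bar\mu$ and $\alpha$.

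The step I expect to be the main obstacle is not the bookkeeping but justifying the scaling ansatz itself — in particular, proving that $\beta(1-\hat q)$ and $\beta(\bar Q-\bar q)$ genuinely converge to finite limits and computing the former through a controlled Laplace/saddle analysis of $\mathbb E_\psi\tanh^2(A\psi+B)$ with $A,B\propto\beta$, estimating the contribution of the sign-change layer and checking that $A>0$ so that the $\mathrm{erf}$ representation is legitimate and that the fixed-point system is internally consistent (e.g.\ that $\bar q$ is indeed driven to the frozen value, and that the rescalings absorbed into $\delta_1,\Pi,G$ are the right ones). An equivalent route, with the same obstacle, is to take $\beta\to\infty$ directly in the pressure~\eqref{presro}: rewriting each $\beta$-divergent term through the finite combinations above (using $\log\cosh(\beta x)\sim\beta|x|$) exhibits $\mathcal A\sim\beta\,e(\cdot)+O(1)$ with $e$ a ground-state energy functional, and extremizing $e$ reproduces the same equations.
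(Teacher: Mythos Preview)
Your proposal is correct and follows essentially the same route as the paper: the paper's proof (Appendix~\ref{limit}) likewise starts from the finite-temperature self-consistencies of Corollary~\ref{self}, posits without justification that $\beta(\bar Q-\bar q)$ and $\beta(1-\hat q)$ stay finite as $\beta\to\infty$, sets $\bar p=\beta/\Pi^2$, and then substitutes and simplifies. Your outline is in fact more explicit than the paper's, which dispatches the $\tanh\to\mathrm{sgn}\to\mathrm{erf}$ passage and the boundary-layer computation of $\delta_1$ under the phrase ``after some simple passages''; the concern you flag about rigorously justifying the scaling ansatz is real but is equally unaddressed in the paper.
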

\begin{proof}

The proof works by tackling delicate limits and it is provided in Appendix (\ref{limit}).
\end{proof}

Finally, it is instructive to treat explicitly also the limit of long sleep time where the network obtain the maximal performances, namely  storing huge numbers of archetypes and saving extensive training set for learning,  compared with the standard (always awake) Hopfield model.
\begin{cor}\label{Narcolessia}
The expectation values of the order parameters of the LaD model in the thermodynamic limit ($N\to\infty$) and in the large dataset ($M\gg1$) and infinite sleep time ($t \to \infty$) scenario fulfil the following set of self-consistent equations
\begin{eqnarray}
D&=&\frac{\sqrt{4\alpha \delta_{1}+(\delta_{1}-1)^{2}}+\delta_{1}+1}{2(1-\alpha)},\\
\bar\mu&=&\frac{\bar{m}}{(1-\rho\frac{\alpha\bar{D}}{1-\bar{D}})},\\
\bar{m}&=&\mathbb{E}_{\psi}\tanh\left(\frac{\beta}{\bar{D}}\sqrt{\alpha\frac{\hat{q}-\bar\mu^{2}(1+\rho)}{\alpha\left(1-\alpha\right)+\left(\frac{\alpha}{1-\bar{D}}\right)^{2}}+\left(\frac{\bar\mu}{\bar{D}}\right)^{2}\rho}\psi+\frac{\beta\bar\mu}{\bar{D}}\right),\\
\hat{q}&=&\mathbb{E}_{\psi}\tanh^{2}\left(\frac{\beta}{\bar{D}}\sqrt{\alpha\frac{\hat{q}-\bar\mu^{2}(1+\rho)}{\alpha\left(1-\alpha\right)+\left(\frac{\alpha}{1-\bar{D}}\right)^{2}}+\left(\frac{\bar\mu}{\bar{D}}\right)^{2}\rho}\psi+\frac{\beta\bar\mu}{\bar{D}}\right),\\
\delta_{1}&=&\beta(1-\hat q).
\end{eqnarray}

\end{cor}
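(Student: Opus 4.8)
The plan is to derive these self-consistencies as the $t\to\infty$ limit of the finite–sleeping-time equations of Corollary~\ref{self}, working at the level of the stationarity conditions rather than of a limiting pressure. The reason is that the statistical pressure of Theorem~\ref{Taccitua} contains the additive term $\tfrac{\alpha}{2}\log(1+t)$, which diverges as $t\to\infty$; since it does not depend on the order parameters it is irrelevant for the extremization, so it is cleaner to bypass it and push the limit through the six coupled equations of Corollary~\ref{self} directly. All other $t$-dependences there enter via the ratios $\tfrac{t}{1+t}$, $\tfrac{1+t}{\bar D+t}$ and $\tfrac{t(2\bar D+t)}{(\bar D+t)^2}$, each of which tends to $1$, except for the genuinely singular combination $(1+t)\beta(\bar Q-\bar q)$ sitting in the denominators of the equations for $\bar D$ and $\bar p$.

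First I would pin down the asymptotics. Inserting the fifth equation of Corollary~\ref{self}, $\bar D^2(\bar Q-\bar q)=1-\hat q-\tfrac{1}{\beta}\tfrac{t\bar D}{1+t}$, into $(1+t)\beta(\bar Q-\bar q)$ shows that $\bar D,\bar\mu,\bar m,\hat q,\bar p,\bar q$ all remain $O(1)$, that $\bar Q-\bar q$ tends to a finite (in fact negative) constant, and that the product $(1+t)\beta(\bar Q-\bar q)$ grows linearly, $\sim t(\delta_1-\bar D)/\bar D^2$ with $\delta_1:=\beta(1-\hat q)$; hence $1-(1+t)\beta(\bar Q-\bar q)\sim t(\bar D-\delta_1)/\bar D^2$. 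Substituting this into $\bar D=1+\alpha t/[1-(1+t)\beta(\bar Q-\bar q)]$ and matching the $O(t)$ parts gives $(\bar D-\delta_1)(\bar D-1)=\alpha\bar D^2$, i.e. the quadratic $(1-\alpha)\bar D^2-(1+\delta_1)\bar D+\delta_1=0$. Since this quadratic equals $-\alpha$ at $\bar D=1$ and opens upward for $\alpha<1$, the point $1$ lies between its two roots; imposing $\bar D\ge1$ selects the larger root, which is the stated $D=\big(\sqrt{4\alpha\delta_1+(\delta_1-1)^2}+\delta_1+1\big)/[2(1-\alpha)]$.

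Next I would propagate the same substitution through the remaining equations. The $\bar\mu$-equation is immediate: with $\tfrac{1+t}{\bar D+t}\to1$ the sixth equation becomes $\bar\mu[1+\rho(1-\delta_1/\bar D)]=\bar m$, and $1+\rho(1-\delta_1/\bar D)=1-\rho\,\alpha\bar D/(1-\bar D)$ is precisely $(\bar D-\delta_1)(\bar D-1)=\alpha\bar D^2$ in disguise, so $\bar\mu=\bar m/[1-\rho\,\alpha\bar D/(1-\bar D)]$. For the $\bar m$ and $\hat q$ equations the only nontrivial step is to recast $\alpha\beta\bar p$ inside the square root: from the $\bar p$-equation $\bar p\sim\beta\bar q(\bar D-1)^2/\alpha^2$, and subtracting the fourth and fifth equations of Corollary~\ref{self} to isolate $\bar D^2\bar q$ and then eliminating $\bar p$ expresses $\bar q$, hence $\alpha\beta\bar p$, as an explicit rational function of $\hat q,\bar\mu,\bar D,\delta_1$; using the $\bar D$-quadratic once more collapses its denominator $\alpha\bar D^3-(\bar D-1)^2(\bar D-2\delta_1)$ into $\bar D[(1-\alpha)(\bar D-1)^2+\alpha]$, which reproduces $\alpha\beta\bar p=\beta^2\alpha[\hat q-\bar\mu^2(1+\rho)]/[\alpha(1-\alpha)+(\alpha/(1-\bar D))^2]$. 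Together with $\tfrac{1+t}{\bar D+t}\to1$ in the linear terms, the arguments of $\tanh$ and $\tanh^2$ take the announced form, and $\delta_1=\beta(1-\hat q)$ is retained as the defining relation.

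I expect the main obstacle to be the bookkeeping of these asymptotics: one must track that $\bar Q-\bar q$ does not vanish in the limit — it tends to a negative $O(1)$ constant, which is legitimate here because $k_i$ is a complex multi-spin so $q_{11}\ge q_{12}$ need not hold — and the repeated algebraic collapses go through only because a factor $2(\bar D-1)$ times the $\bar D$-quadratic keeps reappearing; verifying these cancellations, rather than any new conceptual ingredient, is the crux. A secondary detail worth making explicit is the selection of the physical branch of $\bar D$, for which $\bar D\ge1$ together with continuity in $\alpha$ suffices.
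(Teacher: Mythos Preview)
Your proposal is correct and follows exactly the route the paper takes: the paper's proof (relegated to Appendix~\ref{limit}) consists of the one-line remark ``perform the limit $t\to+\infty$ on the original self-consistent equations, and after simple algebraic manipulation we get the following self-consistent equations'', i.e.\ precisely passing to the limit in the six coupled equations of Corollary~\ref{self}. Your write-up supplies the details the paper omits --- the identification of $1-(1+t)\beta(\bar Q-\bar q)=\alpha t/(\bar D-1)$, the resulting quadratic for $\bar D$, the rewriting of $1-\delta_1/\bar D$ via that quadratic, and the collapse of $\alpha\bar D^3-(\bar D-1)^2(\bar D-2\delta_1)$ into $\bar D[(1-\alpha)(\bar D-1)^2+\alpha]$ --- and all of these check out.
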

\begin{proof}
 The proof works by tackling delicate limits and closes the Appendix (\ref{limit}).
\end{proof}

\begin{cor}\label{exilim}
The expectation values of the order parameters of the LaD model in the thermodynamic limit ($N\to\infty$) and in the large dataset ($M\gg1$), infinite sleep time ($t \to \infty$) and fast-noiseless ($\beta\to\infty$) scenario fulfil the following set of self-consistent equations:

\begin{equation}\begin{split}
\bar{D}&=\frac{\sqrt{4\alpha \delta_{1}+(\delta_{1}-1)^{2}}+\delta_{1}+1}{2(1-\alpha)},\\
\bar\mu&=\frac{\bar{m}}{(1-\rho\frac{\alpha\bar{D}}{1-\bar{D}})},\\
\bar{m}&=\mathrm{erf}\left(\frac{\mu}{G}\right),\\
\delta_{1}&=\frac{2}{\sqrt{\pi}}\exp\left(-\frac{\mu^{2}}{G^{2}}\right)\frac{\bar{D}}{G},\\
G&=\sqrt{2\frac{1-\mu^{2}(1+\rho)}{1-\alpha+\alpha\left(\frac{1}{1-\bar{D}}\right)^{2}}+2\left(\frac{\mu}{\bar{D}}\right)^{2}\rho}.
\end{split}
\end{equation}
\end{cor}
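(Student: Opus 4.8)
The plan is to obtain Corollary \ref{exilim} by pushing one further limit through a result already in hand: take $\beta\to\infty$ in the infinite-sleep-time self-consistencies of Corollary \ref{Narcolessia} (equivalently, take $t\to\infty$ in the ground-state system of Corollary \ref{GroundTruth}; both routes must converge to the same fixed point, and cross-checking them is a convenient consistency test). Starting from the $t\to\infty$ system is the more economical choice, because there $\bar D=\frac{\sqrt{4\alpha\delta_1+(\delta_1-1)^2}+\delta_1+1}{2(1-\alpha)}$ and $\bar\mu=\bar m/(1-\rho\frac{\alpha\bar D}{1-\bar D})$ already depend on $\beta$ only through $\delta_1=\beta(1-\hat q)$; hence, once $\delta_1$ is shown to have a finite limit, these two relations pass verbatim into the statement of Corollary \ref{exilim}. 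So the whole task reduces to the $\beta\to\infty$ behaviour of the two $\mathbb{E}_\psi$ integrals defining $\bar m$ and $\hat q$.

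To this end I would write the common argument of the hyperbolic tangent in Corollary \ref{Narcolessia} as $\beta(A\psi+B)$ with $A=\bar D^{-1}\sqrt{\alpha\frac{\hat q-\bar\mu^2(1+\rho)}{\alpha(1-\alpha)+(\alpha/(1-\bar D))^2}+(\bar\mu/\bar D)^2\rho}$ and $B=\bar\mu/\bar D$, and use the two elementary distributional limits $\tanh(\beta u)\to\operatorname{sgn}(u)$ and $\beta\bigl(1-\tanh^2(\beta u)\bigr)\to 2\delta(u)$ as $\beta\to\infty$. The first, together with $\hat q\to1$ (so that $A\to\sqrt{S}/\bar D$ with $S=\frac{1-\bar\mu^2(1+\rho)}{1-\alpha+\alpha/(1-\bar D)^2}+(\bar\mu/\bar D)^2\rho$), gives $\bar m=\mathbb{E}_\psi\operatorname{sgn}(A\psi+B)=\mathrm{erf}\bigl(B/(A\sqrt2)\bigr)$; setting $G:=A\bar D\sqrt2$ one recognises $G=\sqrt{2\frac{1-\bar\mu^2(1+\rho)}{1-\alpha+\alpha/(1-\bar D)^2}+2(\bar\mu/\bar D)^2\rho}$ and hence $\bar m=\mathrm{erf}(\bar\mu/G)$, the third equation of the statement. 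The $\hat q$-equation itself degenerates to the trivial identity $1=1$ at leading order (as in the ordinary Hopfield ground state), so the nontrivial content sits in the subleading term: applying the second limit to $1-\hat q=\mathbb{E}_\psi\bigl(1-\tanh^2(\beta(A\psi+B))\bigr)$ and changing variables $u=A\psi+B$ yields $\delta_1=\beta(1-\hat q)\to\frac{2}{\sqrt{2\pi}\,|A|}e^{-B^2/(2A^2)}=\frac{2}{\sqrt\pi}\,e^{-\bar\mu^2/G^2}\,\frac{\bar D}{G}$, which is precisely the last equation of Corollary \ref{exilim}. In particular this shows $\delta_1=O(1)$, closing the loop with the $\bar D$ and $\bar\mu$ relations inherited from Corollary \ref{Narcolessia}.

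The main obstacle — flagged by the ``delicate limits'' remark accompanying the statement — is the rigorous justification of interchanging $\beta\to\infty$ with the Gaussian integration and with the implicit (fixed-point) dependence of $A,B$ on $\beta$: one must show that the finite-$\beta$ solutions satisfy $\hat q\to1$ and that $\bar\mu,\bar D$ converge to finite, strictly positive limits, with $1-\bar\mu^2(1+\rho)>0$ and $\delta_1$ bounded away from $0$ and from the pole $\delta_1=1-\alpha$ of the $\bar D$ relation, so that $A$ does not degenerate and $B/A$ stays bounded. I would handle this by establishing uniform boundedness/monotonicity in $\beta$ of the relevant branch of solutions, then invoking dominated convergence for the $\psi$-integrals to pass to the limit in each self-consistency term by term; the only point needing real care is the retrieval/non-retrieval boundary, where $\bar D$ can diverge or $G$ vanish, and there one simply restricts to the parameter region in which the retrieval branch is well defined, consistently with the phase-diagram analysis of Sections \ref{La2.3}--\ref{La2.4}. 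The remaining manipulations are the routine simplifications already carried out in Appendix \ref{limit}.
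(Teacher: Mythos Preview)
Your proposal is correct and follows essentially the same route as the paper: start from the $t\to\infty$ self-consistencies of Corollary \ref{Narcolessia} and push $\beta\to\infty$ using $\tanh(\beta u)\to\operatorname{sgn}(u)$ and $\beta(1-\tanh^2(\beta u))\to2\delta(u)$, then repackage in terms of $G$. The paper's own argument in Appendix \ref{limit} is considerably terser (it simply asserts that ``it is easy to see'' the joint $\beta,t\to\infty$ limit yields the stated system), so your explicit identification of $A,B,G$, the verification of $\delta_1=O(1)$, and your discussion of the boundedness/non-degeneracy hypotheses needed to justify the interchange of limit and Gaussian integration go beyond what the paper spells out, but are entirely in its spirit.
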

\begin{proof}
 The proof works by tackling delicate limits as detailed in the end of Appendix (\ref{limit}).
\end{proof}

In the next subsection we will deepen the content of the analytical results presented so far.
In particular, the numerical inspection of the equations given in Corollary \ref{equazioniperfase} gives rise to the phase diagram provided in Fig.~\ref{Miriam1} for finite noise $\beta$. 
We also focus on the ground state analysis that is useful to find out the maximal storage capacity $\alpha_c$, as this is obviously achieved in the absence of noise: to this aim, we numerically inspect also the equations given in Corollary \ref{GroundTruth} to obtain the phase diagram for zero fast noise $\beta \to \infty$  presented  in Fig.~\ref{Miriam2}. 
Finally, the exploration of the $t \to \infty$ and $\beta \to \infty$ limits allows us to shine light on the most performant operating regime of the network as shown in Fig.~\ref{IronMaiden}, obtained by solving numerically the equations given in Corollary \ref{exilim}.

\subsubsection{The LaD neural network: emergent computational skills}\label{La2.3}

We are now ready to construct the phase diagrams for the LaD model by solving numerically the self-consistent equations provided in the previous corollaries. We focus on the solution belonging to the retrieval region: here the system, if initialized in configurations $\boldsymbol \sigma$ corresponding to any example $\boldsymbol \eta^{1,a}$ or nearby configurations, spontaneously relaxes to the configurations equal to or close to $\boldsymbol \xi^1$; the initialization corresponds to the system input and the final, equilibrium state corresponds to the system reconstruction. This regime is characterized by $\bar m >0$ and $\hat q > 0$: therein learning can be properly accomplished and the network successfully retrieves the archetypes.\\

The control-parameter hyperspace is given by $(\alpha, \beta, t, \rho) \in (0,1] \times [0,\infty)^3$, and, to improve readability, we provide the projections of the phase diagram into the various plans. 

We start with Fig.~\ref{Miriam1}, where we look at the classical $(\alpha, \beta)$ plane \cite{Amit} and we draw the transition lines corresponding to different sleeping times in different colors ranging from $t=0$ (which recovers the Hebbian rule built over examples) to $t \to \infty$ (which recovers the Kohonen rule built over examples); further, the same figure is repeated for different choices of the entropy $\rho$ (in the left panel $\rho=0.00$, in the middle panel $\rho=0.05$, and in the right panel $\rho=0.10$). These results are obtained by solving numerically the self-consistent equations provided in Corollary \ref{self}.
Remarkably, in each panel, therefore for a given value of $\rho$, by increasing the dreaming time, the retrieval region, that is the region below the curve, gets wider and wider and the maximum load supported by the network increases accordingly. However, by increasing the entropy in the dataset, namely by diminishing the accuracy of the dataset, performances are impaired and, in particular, the critical storage diminishes monotonously as $\rho$ increases, resulting in $\alpha_c(\rho=0.00)=1.00$ (that is the upper bound for the storage capacity for symmetric networks), $\alpha_c(\rho=0.05) \approx 0.70$ and $\alpha_c(\rho=0.00) \approx 0.58$. 

\begin{figure}[tb]
\begin{centering}
\includegraphics[width=1.0\textwidth]{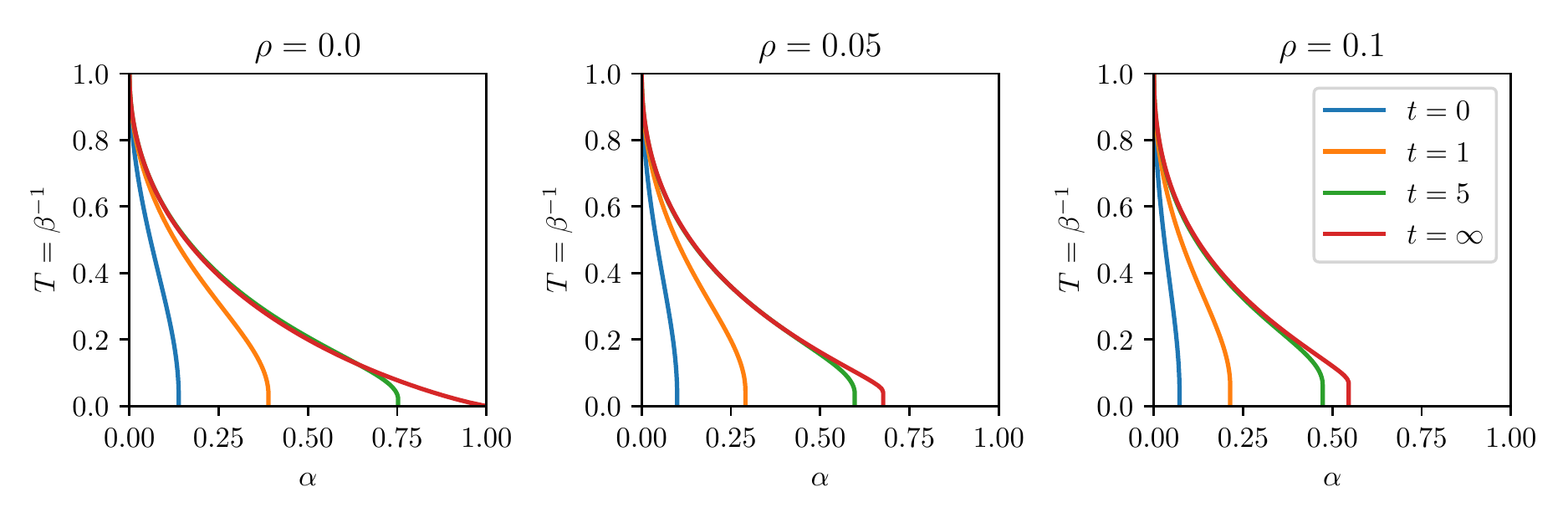}
\par\end{centering}
\caption{\label{Miriam1}Phase diagrams in the standard $(\alpha,\beta)$ plane for different values of the dreaming time $t$, depicted in different colors as reported in the common legend, and for different values of  the dataset entropy  $\rho$, depicted in different panels as reported by titles. In particular, from left to right, we increase the entropy in the datasets from $\rho=0$ (the full information is available, since examples and archetypes coincide and there is no learning but solely storing), to $\rho = 0.05$, and finally $\rho = 0.10$. For a given choice of $\rho$ and $t$, the retrieval region, where the network can retrieve archetypes, corresponds to the region below the related curve.
For $\rho \neq 0$ the computational transition line develops a cuspid when reaching the maximal storage: this is the onset of replica symmetry breaking, that will not be faced in this paper, apart a discussion in Sec.~\ref{sec:conclusions}.}
\end{figure}

Next, we move to Fig.~\ref{Miriam2}, where we focus on the zero fast-noise limit $\beta \to \infty$ and we can represent the phase diagram in the plane $(\rho, \alpha)$, depicting the transition lines corresponding to different sleeping times with different colors.
These results are obtained by the ground state self-consistent equations provided in Corollary \ref{GroundTruth}.
Again, we see that, by increasing the dreaming time, the retrieval region, that is the region below the curve, gets wider and wider and the maximum load $\alpha_c$ supported by the network increases accordingly; if $\rho=0$ the critical capacity drifts from $\alpha_c \approx 0.138$ (obtained for the Hebbian kernel setting $t=0$) to $\alpha_c=1$ (obtained for the ``rested'' networks setting $t \to \infty$). When $\rho$ gets larger and larger, the maximum storage capacity supported by the network becomes equal to zero as expected, because the supplied dataset is by far too corrupted.

\begin{figure}[tb]
\noindent \begin{centering}
\includegraphics[scale=0.75]{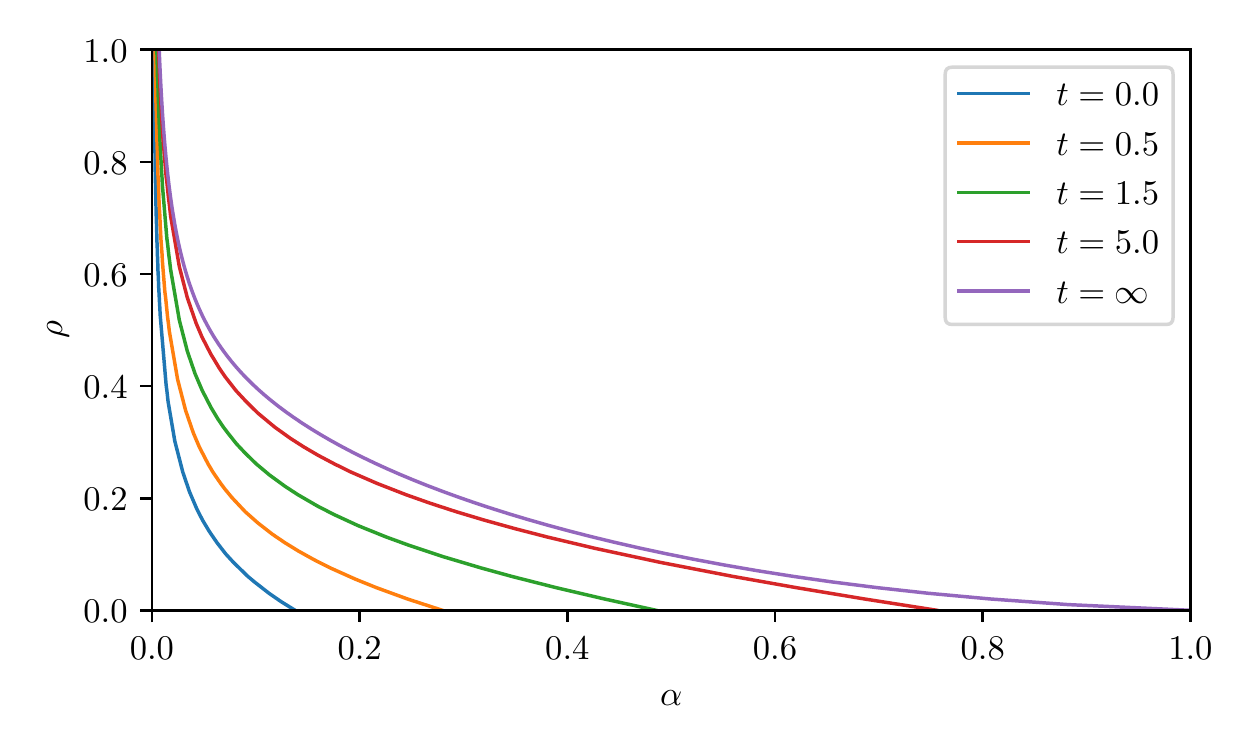}
\par\end{centering}
\caption{Phase diagram for the zero fast-noise limit $\beta \to \infty$, depicted in the $(\rho,\ \alpha)$ plane for different values of the dreaming time $t$, shown in different colors as explained in the legend. For a given choice of $t$, the region under the related curves, corresponds to values of $\alpha$ and $\rho$ such that the network can correctly recognized the archetype (ensuring that the learning stage has been successful); as the dreaming time increases, this region gets wider and wider. }\label{Miriam2}
\end{figure}

\subsubsection{The LaD neural network: the nature of the computational phase transitions}\label{La2.4}

To deepen the nature of the phase transitions discussed above, we recall Erhenfest's classification that suggests to inspect the analytical behavior of both the statistical pressure and the order parameters: a discontinuity in an order parameter or in the derivative of an order parameter constitutes the signature of a first-order or of a second-order phase transition, respectively.\\
As typical examples, in Fig.~\ref{Arfaro} we report  the discontinuity of the order parameter (i.e., the average Mattis magnetization of the first archetype) against the load $\alpha$ of the network, for different choices of $\rho$ and $t$, but setting $\beta \to \infty$.\\
 In Fig.~\ref{Sarta} we show the free energy $f:=-\frac{1}{\beta}\mathcal A$ evaluated for the retrieval solution ($\bar{m} > 0$, dark lines) and the ergodic solution ($\bar{m} \approx 0, \hat q \approx 0$, bright lines) versus the fast noise $\beta^{-1}$ in the limit $t\to\infty$ (where the network reaches the maximum performance), for different values of the load $\alpha$ and of the entropy of the dataset $\rho$ : the intersection of the curves denotes the transition point. The free energy corresponding to the glassy solution (with $\bar m \approx 0$ and $\hat q >0$) is situated far above the other curves and it is not shown in the figure. We observe that the retrieval solution is the one with the lowest free energy (i.e. the equilibrium solution) for relative large values of $\beta$. Moreover as $\rho$ increases, the critical region in the left part of the graph (i.e. for $\beta \gg 1$) where the equilibrium solution seems to be the ergodic one gets more evident: this phenomenon is due to the replica symmetry breaking of the model and we reserve to deepen this point in future works.


Finally, Fig.~\ref{Guarolos} shows the Mattis magnetization of the archetype versus the logarithm of the dataset size. We notice that a singularity in the magnetization occurs at a value of $M$ that, for a given sleeping time $t$, increases with the load $\alpha$, and, for a given $\alpha$ diminishes with $t$. 
This suggests that sleeping mechanisms can, on the one hand, mitigate the interference effects among archetypes and, on the other hand, can make the network generalization-capabilities more effective. In fact, there is a saving in the training dataset size as the sleeping time increases; this feature will be deepened in the next subsection.

\begin{figure}[tb]
\begin{centering}
\includegraphics[width=0.9\textwidth]{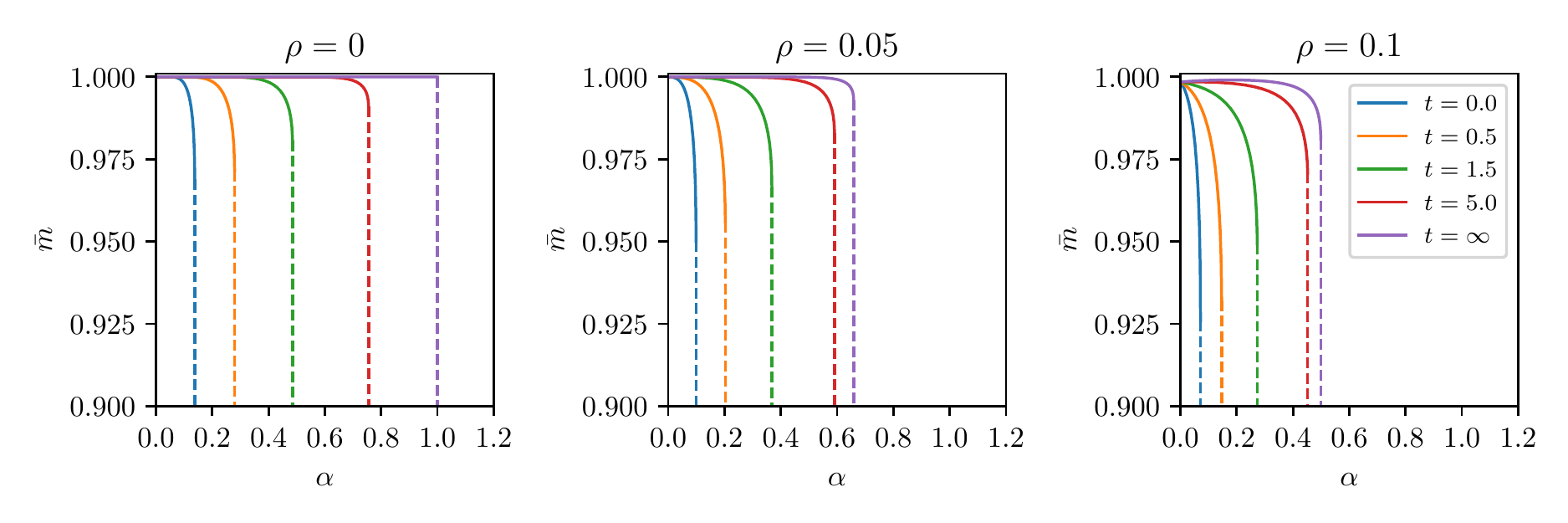} 
\par\end{centering}
\caption{Examples of the behavior of the archetype magnetization $\bar{m}$ as a function of the load
$\alpha$ in the limit $\beta\to+\infty$ for different values of the sleeping time $t$ and at different values of the entropy of the dataset $\rho$. The jumps of the order parameter suggest that the transition is of the first order. Note that, for a given value of $\rho$, as $t$ gets larger and larger, the value of $\alpha$ corresponding to the jump, namely $\alpha_c(\rho, t, \beta \to \infty)$ shifts on the right indicating that higher loads becomes manageable by the network. On the other hand, as $\rho$ is made larger, that is, as the training set is impaired, in order to retain a certain load, the sleeping time has to be increased accordingly.}\label{Arfaro}
\end{figure}
\begin{figure}[tb]
\begin{centering}
\includegraphics[width=0.9\textwidth]{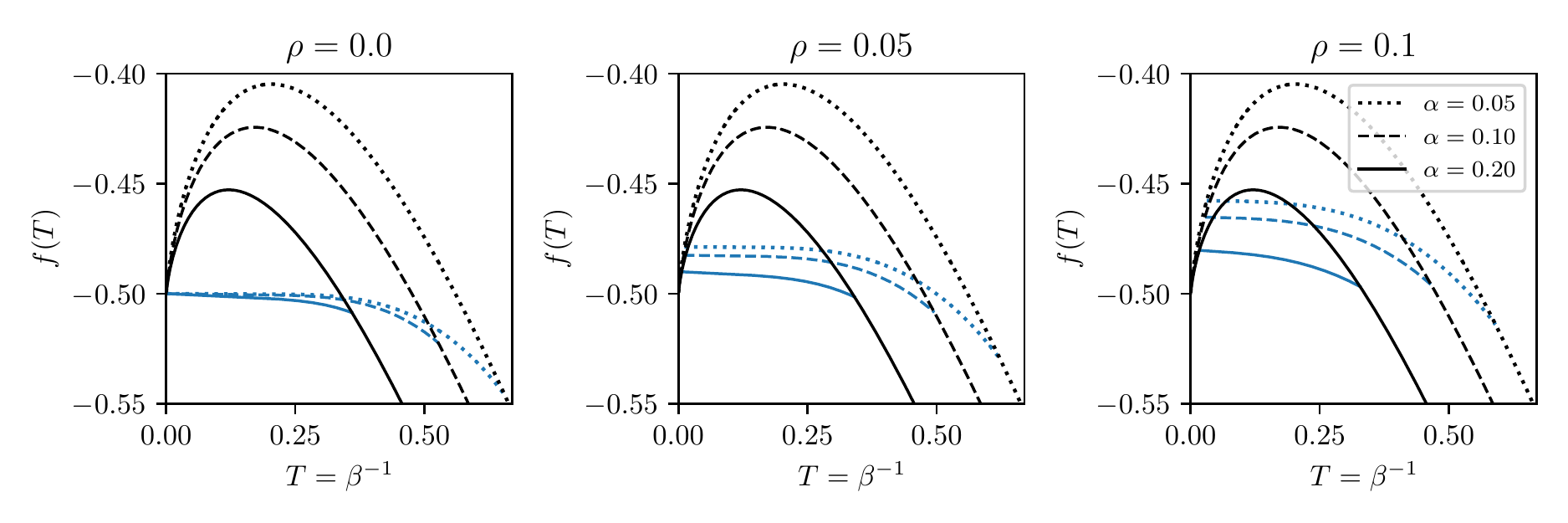}
\par\end{centering}
\caption{Examples of the behavior of the free energy of the LaD network as function of the fast noise $\beta^{-1}$ for three storage values (i.e. $\alpha=0.05, \alpha=0.10,  \alpha=0.20$) and in the infinite sleep time limit $t \to \infty$, presented at different values of the entropy of the dataset $\rho$. Blue curves are those assuming the archetype magnetization is $>0$ in the free energy while black curves trace the free energy of the ergodic solution with no net archetype magnetization $\bar m$ and overlap $\hat q$: the points where these curves merge are the points where the computational phase transition happens. Note that for slow (fast) noise, the lower free energy is the blue one (hence the network works properly), while at high noise  the lower free energy is the black curve (hence the network does not work properly).}\label{Sarta}
\end{figure}


\begin{figure}[H]
\begin{centering}
\includegraphics[width=0.9\textwidth]{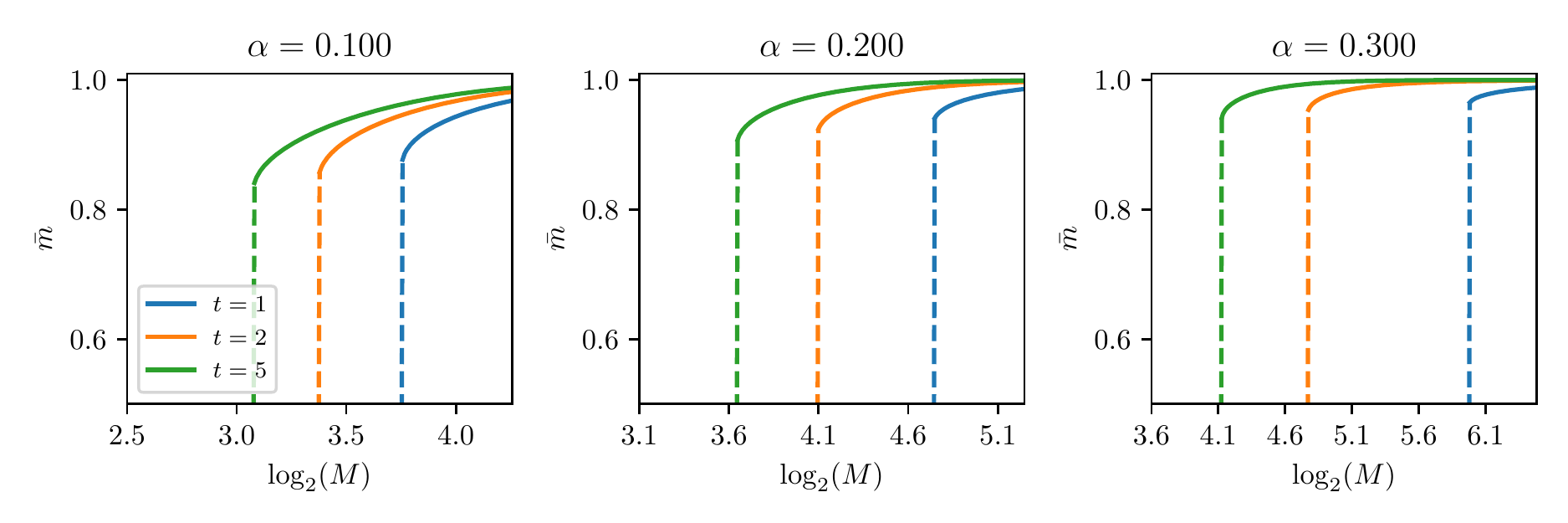}
\par\end{centering}
\caption{Expectation of the archetype magnetization $\bar{m}$ versus the logarithm of the dataset size $M$ in the limit $\beta\to+\infty$ for different values of the dreaming time $t$ and of the load $\alpha$, keeping the quality of the dataset fixed at $r=0.5$.  Load beyond $\alpha = 0.138$ can not be handled by the standard Hopfield model and sleeping mechanisms are pivotal. 
}\label{Guarolos}
\end{figure}

\subsection{The LaD neural network: dataset savings}\label{sec:experiments}

In this subsection we explore an additional reward of the sleeping mechanism, that concerns the amount of training data required by the network to successfully infer the archetypes from the sampled examples, i.e. to successfully retrieve the archetypes if fed with a certain amount of examples.  
More precisely, we aim to evaluate the minimal dataset size for a successful learning. In Fig. \ref{Guarolos} we have shown that we can save on the number of examples required by the network to infer the archetype by increasing the dreaming time and, of course, the amount of saving depends on the load.

In Fig.~\ref{IronMaiden} we show the minimal size of the dataset that guarantees that the archetype can be inferred in the limit $\beta\to\infty$ and, as expected, the minimal size increases as the load gets higher and higher, while in Fig.~\ref{GainGail} we show how much we can save (expressed in terms of the percentage of the unused data) by letting the network sleep. 
%
More specifically, by solving the self-consistent equations in Corollary \ref{GroundTruth}, we evaluate the minimum number of examples $M_{r}(t)$ which, at a given value of the quality $r$ and of the dreaming time $t$, ensures the retrieval of the archetype (i.e., $\bar m >0$). Then, we plot the quantity $S:=\left( 1-\frac{M_{r}(t)}{M_{\textrm{max}}}\right)\times100$ where $M_{\textrm{max}}=\underset{t \in \mathbb R^+}{\max}\{M_r(t)\}$ versus the logarithm of the sleeping time. For a load $\alpha=0.2$, the saving $S$ ranges in $[20\%, 90\%]$, thus sleeping mechanisms can have a dramatic impact on the data resources needed to generalize.

\begin{figure}[H]
\begin{centering}
\includegraphics[scale=1.0]{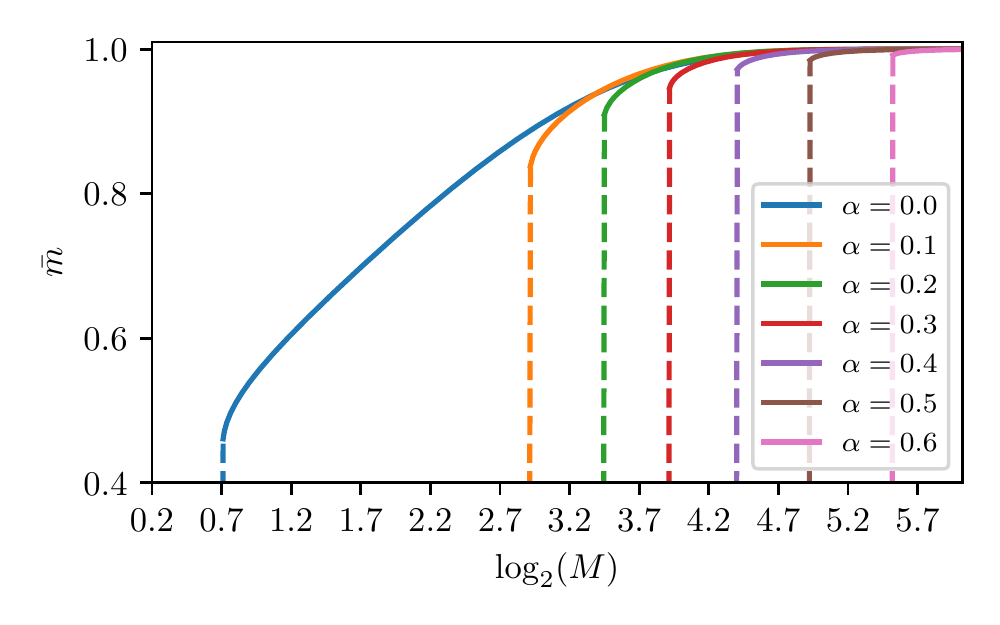}
\par\end{centering}
\caption{\label{IronMaiden}Magnetization of the archetype $\bar{m}$ as a function of the dataset size $M$ in the limits of both $\beta\to \infty$ and $t \to \infty$ for different values of the storage $\alpha$. The noise in the dataset is fixed to $r=0.5$.}
\end{figure}
\begin{figure}[H]
\begin{centering}
\includegraphics[scale=1.0]{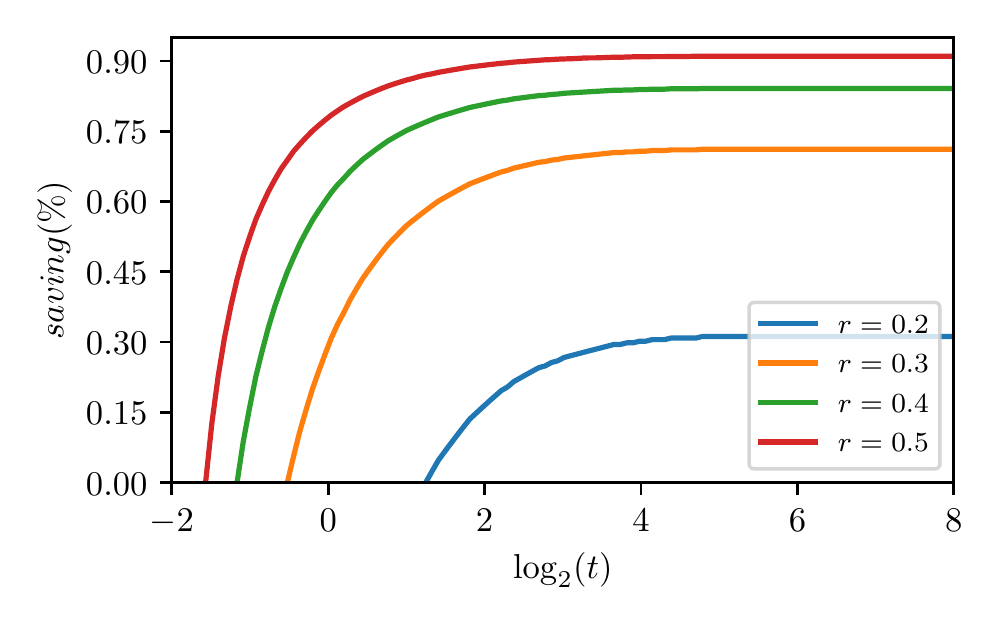}
\par\end{centering}
\caption{\label{GainGail} Example of the quantification of dataset savings induced by the dreaming mechanism (example provided at $\alpha=0.2$). Note that, as $r$ increases the maximum amount of savings reachable increases consequentely, in particular, for not-extremely noisy datasets (e.g. for $r \sim 0.5$, a quite usual case) the minimal number of examples required to the network in order to correctly generalize decreases by an order of magnitude (i.e. up to  $\sim 90\%$ of datasets -and thus computational time- can be spared by allowing the network to sleep).   }
\end{figure}

In order to corroborate experimentally the theoretical results of Fig.~\ref{GainGail}, we start by generating a random dataset made up of $K$ archetypes $\boldsymbol \xi^{1},\dots,\boldsymbol \xi^{\mu}$ with Rademacher entries and the related examples $\boldsymbol \xi^{\mu}\boldsymbol\chi^{\mu,a},\quad \mu\in\{1,\dots,K \},\quad a \in \{1,\dots,M\}$ according to \eqref{eq:chi}, where we set a certain quality $r$. Then, we test the stability of the archetype w.r.t the dynamics at zero temperature, so to find the value of $t$ and of $M$ that ensure that the archetypes are fixed points of the neuron dynamics.
Hereafter we detail the main passages of this investigation.\\ 
The dynamics at zero temperature is given by the following evolutional rule:
\begin{equation}
\boldsymbol \sigma^{(n+1)}=\textrm{sign}(\boldsymbol J \boldsymbol \sigma^{(n)} ), 
	\end{equation}
	where $\boldsymbol \sigma^{(n)}$ is the neuron configuration at the $n$-th iteration, while the matrix $\boldsymbol J$ can be extracted directly by the definition of the Hamiltonian \eqref{ham} and is reported explicitly for clarity
 \begin{equation}\label{kern}
J_{ij}=\frac{1}{M^{2}}\sum_{a=1,b=1}^{M,M}\sum_{\mu=1,\nu=1}^{K,K}\xi_{i}^{\mu}\chi_{i}^{\mu,a}\left(\frac{1+t}{1+t\mathcal{C}}\right)_{\mu\nu}\xi_{j}^{\nu}\chi_{j}^{\nu,b}.
 \end{equation} 
Once having generated the examples, we compute the matrix $\boldsymbol J$ for different values of the dreaming time $t$, then to evaluate the stability of the archetype we proceed as follows:
\begin{itemize}
 \item[--1--]We initialize the neuron configuration as $\boldsymbol \sigma^{(0)}=\boldsymbol \xi^{\mu}$, 
 \item[--2--]After a sufficient number of dynamical iterations\footnote{Empirically, 30 iterations occur to be a sufficient number of iterations to reach the stability} we evaluate the scalar product between the final configuration $\boldsymbol\sigma^{f}$ and the archetype from which we started in the dynamics:
\begin{equation}
 m_{\mu}=\frac{\boldsymbol\sigma^{f}\cdot \boldsymbol\xi^{\mu}}{N} 
 \end{equation}
 \item[--3--]We require $ m^{\mu}$ to be at least $0.95$ and we determine $M_r(t)$.
 \end{itemize}
Once we have collected the $M_r(t)$ for several values of $t$, we compute $M_{\textrm{max}}=\underset{t \in \mathbb R^+}{\max}\{M_r(t)\}$, and then the saving $S$, that is plotted in Fig.~\ref{Risparmio}, versus the logarithm of the sleeping time $t$ and for different choices of the quality $r$. In this experimental part we set the load equal to $\alpha=0.2$ in order to do a comparison between the plot on the left of Fig.~\ref{Risparmio} and that in Fig.~\ref{GainGail}, calculated theoretically by solving the self-consistencies. These two plots are not in perfect agreement because of two reasons: $i.$ we are at finite size ($N$ is finite in this experiment), $ii.$ at experimental level we need to set a certain threshold of recall (in this case 0.95 for $m^{\mu}$) and therefore the results will depend on this threshold; whereas in the mechanistic-statistical case we estimated the $M_{r}(t)$ using as threshold a magnetization of the  archetype different from zero because there is a phase transition; as a consequence the performance is higher in this case.

In a similar manner we can quantify the saving on the number of examples when dealing with a structured dataset, such as the MNIST dataset. In this case the procedure is quite similar to the random case but there is an important difference to take into account:
the MNIST \cite{MNIST} dataset is not parametrized by a quality $r$, so instead of plotting different curves as the quality $r$ varies, we
fix different threshold values for $m^{\mu}$ and solve for the number of examples $M$ and dreaming time $t$ corresponding to that retrieval threshold.
\newline
Specifically, the $M$ examples have been constructed as follows: by using the $k$-means clustering on the binarized MNIST dataset, we find $k$ clusters per digit, each cluster containing a certain number of examples. For each cluster, we choose $M$ random objects and take their average to estimate the centroid of the cluster, these are the analogous of the archetypes in the random case. Then the matrix $\boldsymbol J$ can be computed by using equation \eqref{kern}. \\
Next, we look for the fixed point of the dynamics as before, and we check the stability of $3000$ digits extracted randomly from MNIST. 
In this case, the quantity to save on is the product $M \times k \times 10$, as a consequence, we compute the quantity: $S=(1-(M \times k \times 10)/\max(M \times k \times 10)) \times 100$.
In the right plot of Fig. \ref{Risparmio} we see that the maximum savings increases when we require a lower "reconstructive" threshold, which is in agreement with intuition. Moreover, we observe that, even for structured dataset, we save on the number of examples to be used by increasing the dreaming time $t$.

\begin{figure}[H]
\begin{centering}
\includegraphics[width=0.4\textwidth]{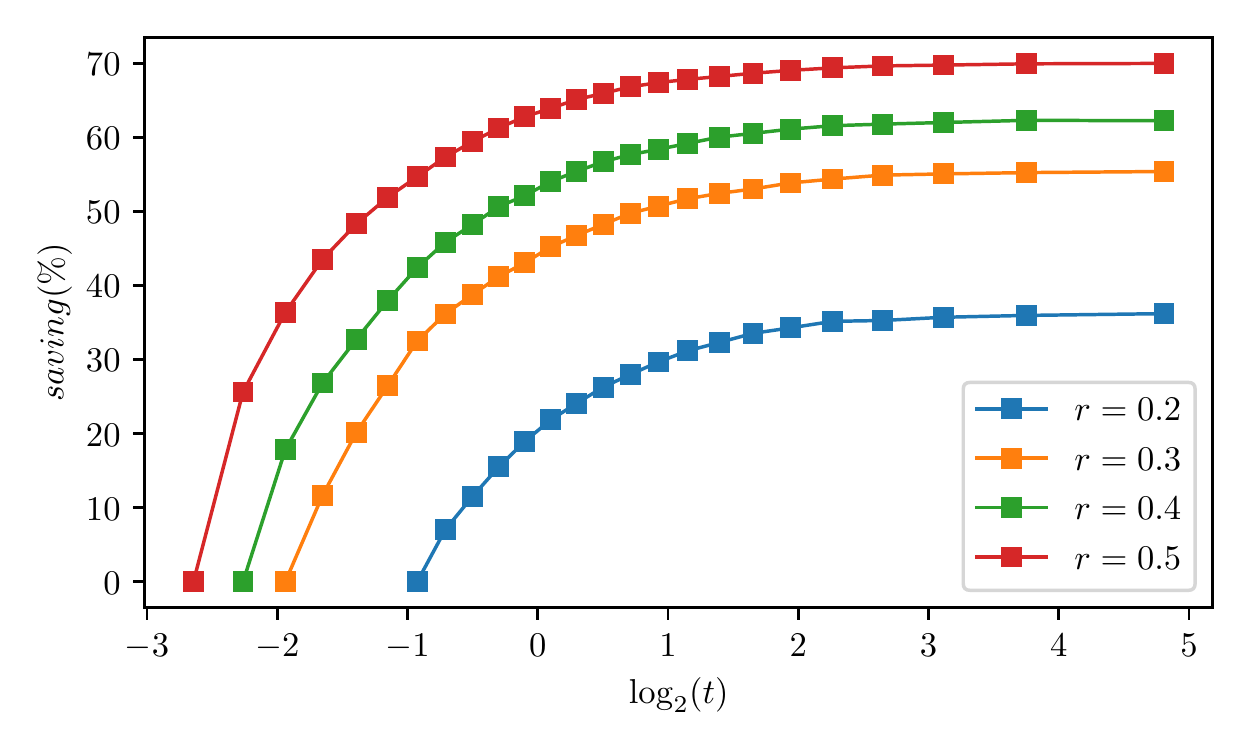}
\includegraphics[width=0.4\textwidth]{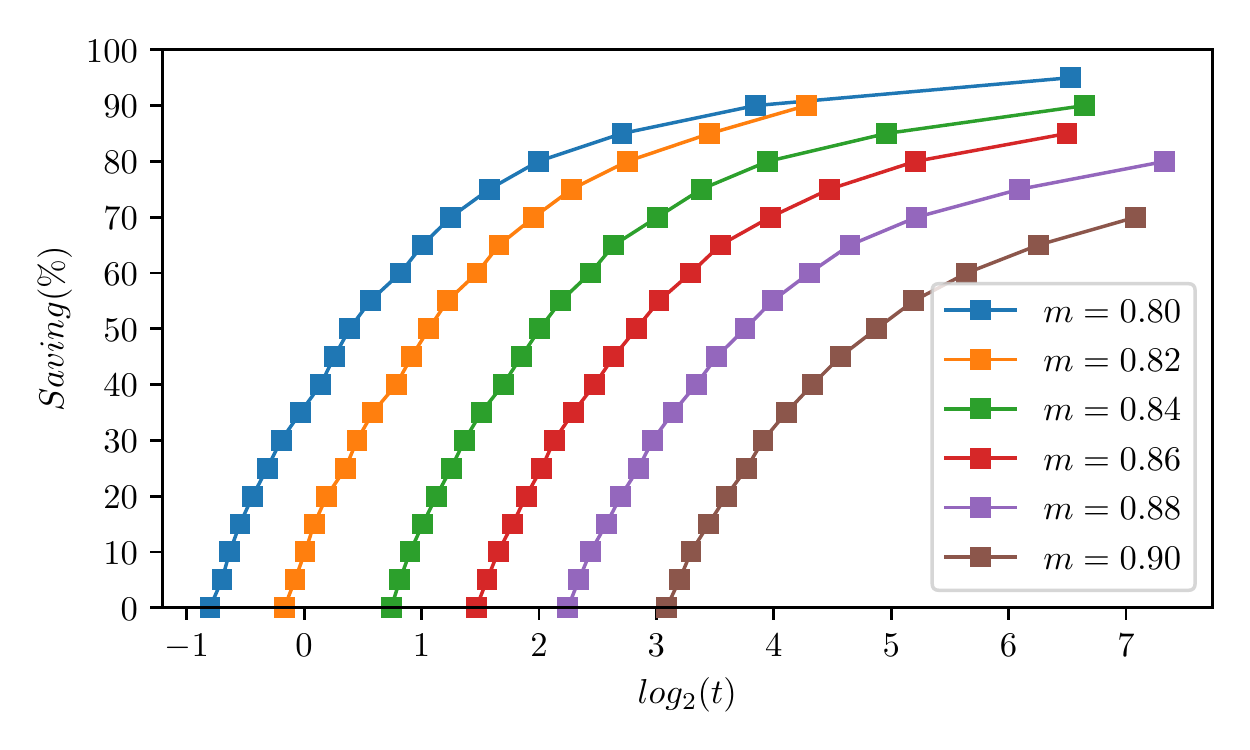}
\par\end{centering}
\caption{\label{Risparmio}Saving on the number of examples needed to reconstruct a random dataset (left plot) and the MNIST dataset (right plot) as a function of the logarithm of the dreaming time. Here we set $\beta \to \infty$.}
\end{figure}

\section{Conclusions}\label{sec:conclusions}

In this paper we exploited the statistical mechanics of disordered systems as a theoretical approach to investigate possible bridges between artificial and biological information-processing networks and possibly contribute towards OAI. 

As a subject of study we introduced the so-called LaD model, namely an Hopfield-like model where the Hebbian kernel is revised to account for $i$. sleeping mechanisms and $ii$. the availability of corrupted examples of archetypes that are the true target for retrieval. The task of this network is to infer from the supplied examples \cite{Giordano,Ido1}. 

A number of conclusions can be draw from the present research:
\begin{itemize}
\item on the model: there is a long standing question about the need of a large number of examples by machine learning algorithms before generalization can take place (if compared with biological neural network performances) \cite{Stefano,CorCazzo,Examples,NHB}. This critical dataset size is sensibly reduced by sleeping mechanisms, possibly saving up to $\sim 90\%$ of the required dataset in realistic situations. This is a strong indication that thanks to the implementation of these  biologically-inspired mechanisms we can still do statistical inference and reach good performance even if just small datasets are available;	 further analyzing smaller volumes of data usually implies also saving running time and thus ultimately diminishing energy consumption.

\item on the model: sleeping mechanisms allow handling much more information w.r.t. the ``always awake'' counterpart, in particular -- restricting to pairwise models -- the awake Hopfield network stores $0.138$ bits per neuron, while this sleepy version (in the long sleep limit) reaches the theoretical bound of one byte per neuron. This is a strong indication that the implementation of these mechanisms will be essential for all the data analysis that involve a large number of patterns (i.e., archetypes) compared with the available computational resources (i.e. the amount of neurons in the neural network). 
Further, with the same amount of neurons and synapses the sleepy counterpart of the Hopfield model has always an enlarged memory capacity if compared with the bare Hopfield reference and, in this regard, it is thus always preferable. 
 
\item on the method: the statistical mechanical formalization of neural network provides phase diagrams and these are pivotal en route toward OAI, namely for a conscious usage of machine learning. In particular, we provided all the phase diagrams for the LaD model under inspection in this paper. 

\item on the method: the statistical mechanical formalization of learning achieved by our approach bypasses {\em ad hoc} introduced measures about the goodness of the learning process (e.g. performances, scores. etc), while it naturally suggests the usage of the order parameters, in particular the Mattis magnetizations, both for the learning stage as well as for the retrieval stage, making these two aspects of information processing just two sides of the same coin, namely the phenomenon of cognition. Note that this is possible solely thanks to the knowledge of the phase diagrams as, once set the network in the retrieval region, if the output of the network is a strong example-magnetization, then learning as not accomplished properly, vice-versa, if in that region the network returns correctly the archetype, learning has been successful (this argument fails if used away from the retrieval region as just in that region we expect the network to provide the archetype if learning has been accomplished).

\item we close with a philosophical remark on replica symmetry breaking: it is desirable to have a broken replica symmetry picture of neural networks because it suggests the hierarchical classification of concepts in the network as a natural emergent property of Hebbian learning even if mathematically still intractable. The reinforcement\&removal model network, in the high sleep limit, reaches the Kanter-Sompolinsky perfect retrieval memory \cite{KanSo}  (we proved this in \cite{FAB,ABF}).  As the latter behaves in a replica symmetric way, from one side, storing patterns up to maximal allowed values is a remarkable property, but from another side,  the model loses the ultrametric organization of the storage and this is clearly a strong limit of the picture about the emerging computational skills of these networks. But if we move to a real Hebbian learning (where the network has to infer the archetypes lying behind noisy examples that are the solely accessible), this noise in the supplied datasets (i.e. a not-zero entropy of the dataset $\rho\geq0$) forces also the high sleep limit model to undergo replica symmetry breaking (see Fig. \ref{Miriam1}), thus preserving the hope for an ultrametric classification: we reserve to deepen this point in a future communication.
\end{itemize} 

\section*{Acknowledgment}
The Authors acknowledge financial support by Ministero per gli Affari Esteri e la Collaborazione Internazionale (MAECI)  for the Scientific and Technical Collaboration between Israel and Italy, {\em BULBUL, Brain Inspired Ultra-fast and Ultra-sharp machines for health-care}, Project n. F85F21006230001.  
\newline
Partial funding by INdAM via PhD-AI and by MUR via the PRIN grant {\em Stochastic Methods for Complex Systems} n. 2017JFFHS and the PON R$\&$I ARS014 00876 BIO-D {\em Sviluppo di Biomarcatori Diagnostici per la medicina di precisione e la terapia personalizzata} is also acknowledged.
\newline
EA and MA are also grateful to Sapienza University of Rome (RM120172B8066CB0, RM12117A8590B3FA) for financial support.

\appendix

\section{Maximum entropy inference to obtain the LaD cost function}\label{Massimino} \label{app:1}
In equilibrium statistical mechanics, the Boltzmann-Gibbs distribution of a given cost function (i.e., a Hamiltonian) is the classical incipit for drawing out the related theory, however in theoretical AI this modus operandi is not yet standard, hence it can be useful to derive the cost function under study from the more familiar statistical-inference framework: in this way we can appreciate what kind of constraints force the machine to operate properly and what kind of correlations the machine searches for in the supplied datasets. 
\newline
Among the variational criteria, the most natural framework is the maximum entropy principle in the Jaynes interpretation \cite{Jaynes}: in a nutshell, given a dataset $\{\boldsymbol\sigma^1,...,\boldsymbol\sigma^n \}$ composed of multiple observations of the quantity $\boldsymbol\sigma$ (e.g. the neural configuration of the network or an example of an archetype coded in binary digits), this approach allows to reconstruct the probability distribution $\mathcal{P}(\boldsymbol \sigma)$  from a limited number of empirical observations, which would be too small to reconstruct such distribution directly from the data. The maximum entropy method reconstructs the probability distribution as the least-structured  distribution that matches the experimental averages. For our purposes we can start by the next  
\begin{defn}
Given a probability density function $\mathcal{P}(\boldsymbol{\sigma},\boldsymbol{z},\boldsymbol{\phi})$
where $\boldsymbol{z}=\{z_{1},\dots,z_{K}\} \in \mathbb{R} ^{K},\quad\boldsymbol{\phi}=\{\phi_{1},\dots,\phi_{N}\}  \in \mathbb{R}^{N}  $ and $\boldsymbol{\sigma}=\{\sigma_1,\dots,\sigma_{N}\} \in \pm [-1,1]^{N}$,
the expectation of an observable $O(\boldsymbol{\sigma},\boldsymbol{z},\boldsymbol{\phi})$
is defined as
\begin{equation}
\langle O\rangle_{\mathcal{P}}=\sum_{\sigma}\int\left[\prod_{\mu=1}^{K}\frac{dz_{\mu}}{\sqrt{2\pi}}\right]\left[\prod_{i=1}^{N}\frac{d\phi_{i}}{\sqrt{2\pi}}\right]O(\boldsymbol{\sigma},\boldsymbol{z},\boldsymbol{\phi})\mathcal{P}(\boldsymbol{\sigma},\boldsymbol{z},\boldsymbol{\phi}),
\end{equation}
or, equivalently, using the trace operator $\text{Tr}$
\begin{equation}
\langle O\rangle_{\mathcal{P}}=\text{Tr}[O(\boldsymbol{\sigma},\boldsymbol{z},\boldsymbol{\phi})\mathcal{P}(\boldsymbol{\sigma},\boldsymbol{z},\boldsymbol{\phi})].
\end{equation}
\end{defn}

\begin{defn}
\label{deflag}The correlation functions to be fixed in order to obtain a probability distribution that is the Boltzmann-Gibbs one with the cost function given by the Lad model are
\begin{equation}
\begin{split} & C_{\phi^{2}}=\sum_{i=1}^{N}\langle\phi_{i}^{2}\rangle_{exp}\end{split},
\end{equation}
\begin{equation}
C_{z^{2}}=\sum_{\mu=1}^{K}\langle z_{\mu}^{2}\rangle_{exp},
\end{equation}
\begin{equation}
C_{\phi z}^{i,\mu}=\langle\phi_{i}z_{\mu}\rangle_{exp},
\end{equation}
\begin{equation}
C_{\sigma z}^{i,\mu}=\langle\sigma_{i}z_{\mu}\rangle_{exp}.
\end{equation}
where $i=(1,\dots,N)$ and $\mu=(1,\dots,K)$ and with the notation
$\langle\cdot\rangle_{\exp}$ we mean that we are considering experimentally
evaluated quantities.
\end{defn}
\begin{rem}
Note that we are requiring two kind of constraints: the former are requests on the second moment of the real valued neurons (that, as we will show soon, induce $L_2$ norm in their spaces), the latter are the knowledge of the mixed pairwise correlation between the visible and the hidden layers (e.g., for instance, by achieving learning within the grand-mother cell settings, that is exact in the present random setting \cite{Leonelli,Clement,kanter}, see Figure \ref{fig:GeneralizedBoltmannMachine} right panel).  
\end{rem}
\begin{defn}
The constrained entropy $\mathcal{S}[\mathcal{P}]$ to be maximized is
\begin{equation}
\begin{split} & \mathcal{S}[\mathcal{P}]=-\text{Tr}(\mathcal{P}\log\mathcal{P})+\lambda_{0}[\text{Tr}(\mathcal{P})-1]+\frac{\lambda_{1}}{2}\left[\text{Tr}\left(\sum_{\mu=1}^{K}z_{\mu}^{2}\mathcal{P}\right)-C_{z^{2}}\right]+\frac{\lambda_{2}}{2}\left[\text{Tr}\left(\sum_{i=1}^{N}\phi_{i}^{2}\mathcal{P}\right)-C_{\phi^{2}}\right]+\\
 & +\sum_{i,\mu=1}^{N,K}\Lambda_{i,\mu}\left[\text{Tr}(\sigma_{i}z_{\mu}\mathcal{P})-C_{\sigma z}^{i,\mu}\right]+\sum_{i,\mu=1}^{N,K}\Theta_{i,\mu}\left[\text{Tr}(\phi_{i}z_{\mu}\mathcal{P})-C_{\phi z}^{i,\mu}\right].
\end{split}
\label{act}
\end{equation}
where the real valued parameters $\lambda_{0,1,2}$,$\Lambda_{i,\mu}$ , $\Theta_{i,\mu}$ 
with $i=(1,\dots,N)$, $\mu=(1,\dots,K)$ are the Lagrangian multipliers imposing the constraints on the second moments and pairwise correlation functions.
\end{defn}

\begin{thm}
The probability density function obtained by applying the maximum entropy
principle to the function $\mathcal{S}[\mathcal{P}]$ is
\begin{equation}
\mathcal{P}=\exp\left[1-\lambda_{0}-\frac{\lambda_{1}}{2}\sum_{\mu=1}^{K}z_{\mu}^{2}-\frac{\lambda_{2}}{2}\sum_{i=1}^{N}\phi_{i}^{2}-\sum_{i,\mu=1}^{N,K}(\Lambda_{i,\mu}\sigma_{i}z_{\mu}+\Theta_{i,\mu}\phi_{i}z_{\mu})\right]\label{prob}
\end{equation}
where the Lagrangian multipliers $\lambda_{0,1,2}$,$\Lambda_{i,\mu}$, $\Theta_{i,\mu}$ satisfies the constraints given in Definition
(\ref{deflag}).
\end{thm}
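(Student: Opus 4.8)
The plan is to derive the claimed form of $\mathcal{P}$ by the standard Lagrange-multiplier argument for constrained entropy maximization, treating the functional $\mathcal{S}[\mathcal{P}]$ in \eqref{act} as a function of the unknown density $\mathcal{P}(\boldsymbol{\sigma},\boldsymbol{z},\boldsymbol{\phi})$ and imposing stationarity. First I would take the functional derivative of $\mathcal{S}[\mathcal{P}]$ with respect to $\mathcal{P}(\boldsymbol{\sigma},\boldsymbol{z},\boldsymbol{\phi})$ at a generic point. Using $\frac{\delta}{\delta \mathcal{P}}\big(-\mathrm{Tr}(\mathcal{P}\log\mathcal{P})\big) = -\log\mathcal{P} - 1$ and observing that each constraint term contributes linearly its multiplier times the associated observable (the normalization constraint contributes $\lambda_0$, the second-moment constraints contribute $\tfrac{\lambda_1}{2}\sum_\mu z_\mu^2$ and $\tfrac{\lambda_2}{2}\sum_i \phi_i^2$, and the pairwise constraints contribute $\sum_{i,\mu}\Lambda_{i,\mu}\sigma_i z_\mu$ and $\sum_{i,\mu}\Theta_{i,\mu}\phi_i z_\mu$), the stationarity condition $\delta \mathcal{S}/\delta\mathcal{P}=0$ reads
\begin{equation}
-\log\mathcal{P} - 1 + \lambda_0 + \frac{\lambda_1}{2}\sum_{\mu=1}^{K}z_{\mu}^{2} + \frac{\lambda_2}{2}\sum_{i=1}^{N}\phi_{i}^{2} + \sum_{i,\mu=1}^{N,K}\big(\Lambda_{i,\mu}\sigma_{i}z_{\mu}+\Theta_{i,\mu}\phi_{i}z_{\mu}\big) = 0 .
\end{equation}

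Next I would solve this algebraically for $\mathcal{P}$: exponentiating gives exactly
\begin{equation}
\mathcal{P}=\exp\left[1-\lambda_{0}-\frac{\lambda_{1}}{2}\sum_{\mu=1}^{K}z_{\mu}^{2}-\frac{\lambda_{2}}{2}\sum_{i=1}^{N}\phi_{i}^{2}-\sum_{i,\mu=1}^{N,K}(\Lambda_{i,\mu}\sigma_{i}z_{\mu}+\Theta_{i,\mu}\phi_{i}z_{\mu})\right],
\end{equation}
which is the asserted \eqref{prob}. Then I would close the argument by noting that the remaining stationarity conditions of $\mathcal{S}[\mathcal{P}]$ — the derivatives with respect to the multipliers $\lambda_{0,1,2}$, $\Lambda_{i,\mu}$, $\Theta_{i,\mu}$ — simply reinstate the constraints of Definition~\ref{deflag}, so the multipliers in \eqref{prob} are fixed (implicitly) by requiring that the experimental correlation functions $C_{z^2}$, $C_{\phi^2}$, $C_{\sigma z}^{i,\mu}$, $C_{\phi z}^{i,\mu}$ are reproduced, together with normalization $\mathrm{Tr}(\mathcal{P})=1$. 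That $\mathcal{P}$ is a genuine maximum (not merely a stationary point) follows from concavity of the Shannon entropy functional $-\mathrm{Tr}(\mathcal{P}\log\mathcal{P})$ in $\mathcal{P}$ and the linearity of all constraints, so the stationary point is the unique global maximizer over the affine constraint set.

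The only real subtlety — and the point I expect to need the most care — is the status of the Gaussian integrals over $\boldsymbol z\in\mathbb{R}^K$ and $\boldsymbol\phi\in\mathbb{R}^N$: for $\mathcal{P}$ to be normalizable one needs the quadratic form in $(\boldsymbol z,\boldsymbol\phi)$ appearing in the exponent to be positive definite, i.e. one needs $\lambda_1,\lambda_2>0$ and the cross-terms encoded by $\Lambda,\Theta$ small enough that the full Hessian of the exponent stays positive definite after the discrete sum over $\boldsymbol\sigma$ is performed. I would simply remark that this is guaranteed in the regime of interest (and is consistent with the identification $\lambda_1 = (1+t)^{-1}$, $\lambda_2 = 1$ that matches the integral representation \eqref{linearz} of the partition function), so the variational problem is well posed; a fully rigorous treatment of admissibility of the multipliers can be deferred, as it does not affect the form \eqref{prob} which is what the theorem asserts. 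The rest is the routine functional-derivative computation sketched above.
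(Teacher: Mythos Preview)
Your proposal is correct and follows exactly the paper's own proof, which is the single line ``By extremizing the function \eqref{act} with respect to $\mathcal{P}$ and $\lambda_{0,1,2},\Lambda_{i,\mu},\Theta_{i,\mu}$ we get the thesis''; you simply spell out the functional derivative and add the (standard) concavity remark. One cosmetic slip: solving your displayed stationarity equation actually yields $\mathcal{P}=\exp\!\big[\lambda_0-1+\tfrac{\lambda_1}{2}\sum_\mu z_\mu^2+\cdots\big]$, i.e.\ the opposite global sign in the exponent, which matches \eqref{prob} only after the harmless relabeling $\lambda\to-\lambda$ of the (as yet undetermined) multipliers---a convention the paper adopts without comment.
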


\begin{proof}
By extremizing the function (\ref{act}) with respect to $\mathcal{P}$
and $\lambda_{0,1,2}$,$\Lambda_{i,\mu}$ , $\Theta_{i,\mu}$ we get
the thesis.
\end{proof}
\begin{rem}
We introduce a normalization constant $\mathcal{Z}=\exp(\lambda_{0}-1)$
such that the probability density function (\ref{prob}) can be rewritten
as
\begin{equation}
\mathcal{P}(\boldsymbol{\sigma},\boldsymbol{z},\boldsymbol{\phi})=\frac{1}{\mathcal{Z}}\exp\left[-\frac{\lambda_{1}}{2}\sum_{\mu=1}^{K}z_{\mu}^{2}-\frac{\lambda_{2}}{2}\sum_{i=1}^{N}\phi_{i}^{2}-\sum_{i,\mu=1}^{N,K}(\Lambda_{i,\mu}\sigma_{i}z_{\mu}+\Theta_{i,\mu}\phi_{i}z_{\mu})\right].\label{probc}
\end{equation}
\end{rem}

\begin{rem}
By integrating with respect to $\phi_{i}$ equation (\ref{probc})
we get
\begin{equation}
\mathcal{P}(\boldsymbol{\sigma},\boldsymbol{z})=\frac{1}{\mathcal{Z}}\exp\left[-\frac{1}{2}\sum_{\mu,\nu=1}^{K,K}z_{\mu}\left(\lambda_{1}\delta_{\mu\nu}-\sum_{i=1}^{N}\frac{\Theta_{i,\mu}\Theta_{i,\nu}}{\lambda_{2}}\right)z_{\nu}-\sum_{i,\mu=1}^{N,K}\Lambda_{i,\mu}\sigma_{i}z_{\mu}\right]\label{probm}
\end{equation}

where we inserted inside the normalization factor $\mathcal{Z}$ other
constants stemming from the integration of $\phi_{i}$.

By integrating again the probability function (\ref{probm}) with
respect to $z_{\mu}$ we get
\begin{equation}
\mathcal{P}(\boldsymbol{\sigma})=\frac{1}{\mathcal{Z}}\exp\left[\frac{1}{2}\sum_{\nu,\mu=1}^{K}\left(\sum_{i=1}^{N}\Lambda_{i,\mu}\sigma_{i}\right)(M^{-1})_{\mu\nu}\left(\sum_{j=1}^{N}\Lambda_{j,\nu}\sigma_{j}\right)\right]\label{quadrat}
\end{equation}

where

\begin{equation}
M_{\mu\nu}:=\lambda_{1}\delta_{\mu\nu}-\sum_{i=1}^{N}\frac{\Theta_{i,\mu}\Theta_{i,\nu}}{\lambda_{2}}
\end{equation}

The Lagrangian multipliers $\lambda_{0,1,2}$, $\Lambda_{i,\mu}$,
$\Theta_{i,\mu}$ must be tuned to reproduce the specific correlation
functions of the LaD model. By comparing the probability density function
$\mathcal{P}(\boldsymbol{\sigma},\boldsymbol{z},\boldsymbol{\phi})$
(\ref{probc}) with the exponential in equation (\ref{linearz})we
get
\begin{equation}\begin{split}
\lambda_{1}&=\frac{1}{1+t},\\
\lambda_{2}&=1,\\
\Lambda_{i,\mu}&=\sqrt{\frac{\beta}{N\Gamma}}\frac{1}{M}\sum_{a=1}^{M}\xi_{i}^{\mu}\chi_{i}^{\mu,a},\\
\Theta_{i,\mu}&=i\sqrt{\frac{t}{\beta(1+t)}}\Lambda_{i,\mu}.
\end{split}
\end{equation}

By comparing the probability function $\mathcal{P}(\boldsymbol\sigma)$ (\ref{probm})
with that related to the partition function (\ref{parfun}) it follows
that
\begin{equation}
(M^{-1})_{\mu\nu}=\left(\frac{1+t}{1+t\mathcal{C}}\right)_{\mu\nu}.
\end{equation}
\end{rem}

\section{The decorrelation matrix and the dreaming time: a toy example}\label{Deca} \label{app:2}
In this section we briefly inspect the reinforcement\&removal (RR) algorithm  \cite{FAB} providing a simple explanation about how it works, directly on the archetypes.  In the limit of $r=1$ (where example and archetype do coincide) the cost function of the LaD model collapses into  the Hamiltonian of the RR algorithm $\mathcal{H}_{\textrm{RR}}(\boldsymbol \sigma|\boldsymbol \xi)$, that we recall for the sake of clearness in the next
\begin{defn}\label{Hd}
Given $N$ Ising neurons $\sigma_{i} \in \{-1,+1\}, ~ i=(1,\dots,N)$,
$K=\alpha N$ (with $\alpha \in [0,1]$) archetypes $\xi_{i}^{\mu}\in\left\{ -1,1,\right\}, ~ i=(1,\dots,N), ~ \mu=(1,\dots,K)$, the cost function of the reinforcement\&removal algorithm reads as:
\begin{equation}
	\mathcal H_{RR}(\boldsymbol \sigma|\boldsymbol \xi)=-\frac{1}{2N}\sum_{i,j=1}^{N}\sum_{\nu,\mu=1}^{K}\xi_{i}^{\mu}\xi_{j}^{\nu}\left(\frac{1+t}{1+t\mathcal C}\right)_{\mu\nu}\sigma_{i}\sigma_{j}
\end{equation}
where $\mathcal C$ is the correlation matrix between the patterns $\xi^\mu$ with $\mu=1,\dots,P$:
\begin{equation}\label{removal}
	\mathcal C_{\mu\nu}=\frac{1}{N}\sum_i^N\xi^\mu_i\xi^\nu_i.
\end{equation}
and $t$ accounts for the sleeping time.
\end{defn}
\begin{rem}
One could argue that, in the random settings where we are operating, patterns are are orthogonal by definition, but this is true solely in the infinite volume limit, otherwise they share a correlation $\sim N^{-\frac{1}{2}}$ and it is such a finite-volume interference that ultimately pushes the standard Hopfield network in the spin-glass phase as its critical threshold $\alpha_c \sim 0.138$ is reached.
\end{rem}
Note that the Hamiltonian given in definition \eqref{Hd} reduces further to the standard Hopfield model for a dreaming time equal to zero $t=0$ whereas for a infinite dreaming time $t\to\infty$, its coupling converges into the projection  matrix and the model returns the Kanter-Sompolinsky (KS) {\em perfect recall} memory \cite{KanSo} given by:
\begin{equation}\label{Hinfinity}  
	\mathcal H_{KS}(\boldsymbol \sigma|\boldsymbol \xi)=-\frac{1}{2N}\sum_{i,j=1}^{N}\sum_{\nu,\mu=1}^{P}\xi_{i}^{\mu}\xi_{j}^{\nu}\mathcal C^{-1}_{\mu\nu}\sigma_{i}\sigma_{j}.
\end{equation}
In order to understand the consequences of the introduction of the correlation matrix into the the Hopfield Hamiltonian we take into account the infinite-time Hamiltonian \eqref{Hinfinity} in the simple case of two patterns, i.e. $P=2$.
In this case the correlation matrix assumes the following form:
	\begin{equation}
		\mathcal C= \left(\begin{array}{cc}
		1 & \xi_{1}\cdot\xi_{2}/N\\
		\xi_{1}\cdot\xi_{2}/N & 1
		\end{array}\right)
		\end{equation}
and its inverse $\mathcal C^{-1}$ (which we refer to as the decorrelation matrix) is equal to:
	\begin{equation}\label{invc}
	\mathcal C^{-1}=\frac{1}{1-(\xi_{1}\cdot\xi_{2})^{2}/N^{2}}\left(\begin{array}{cc}
	1 & -\xi_{1}\cdot\xi_{2}/N\\
	-\xi_{1}\cdot\xi_{2}/N & 1
	\end{array}\right).
	\end{equation}
	By multiplying \eqref{invc} with a vector containing the patters we get:
	\begin{equation}
	\frac{1}{1-(\xi_{1}\cdot\xi_{2})^{2}/N^{2}}\left(\begin{array}{cc}
	1 & -\xi_{1}\cdot\xi_{2}/N\\
	-\xi_{1}\cdot\xi_{2}/N & 1
	\end{array}\right)\left(\begin{array}{c}
	\xi_{1}\\
	\xi_{2}
	\end{array}\right)=\frac{1}{1-(\xi_{1}\cdot\xi_{2})^{2}/N^{2}}\left(\begin{array}{c}
	\xi_{1}-\xi_{2}\hat{\xi}_{1}\cdot\hat{\xi}_{2}\\
	\xi_{2}-\xi_{1}\hat{\xi}_{1}\cdot\hat{\xi}_{2}
	\end{array}\right)
	\end{equation}
so the decorrelation matrix in \eqref{invc} applies the Gram-Schmidt algorithm to the patterns vector.
By introducing $\tilde{\xi}_\mu=\sum_{\nu}C^{-1}_{\mu\nu}\xi_\nu$, $\tilde{m}_\mu=\frac{1}{N}\sum_{j=1}^{N}\tilde{\xi}_j^\mu\sigma_{j}$ the Hamiltonian $\mathcal{H}_{KS}(\sigma|\xi)$ in equation \eqref{Hinfinity} can be rewritten as follows:
\begin{equation}\label{tinf}
	\mathcal H_{KS}(\boldsymbol\sigma|\boldsymbol\xi)=-\frac{1}{2N}\sum_{\mu=1}^{P}\sum_{i=1}^{N}\xi_{i}^{\mu}\sigma_{i}\sum_{j=1}^{N}\tilde{\xi}_j^\mu\sigma_{j}=-\frac{N}{2}\sum_{\mu=1}^Pm_\mu \tilde{m}_\mu.
\end{equation}
whereas for $t=0$ (i.e. in the Hopfield limit) we get:
\begin{equation}\label{correlation}
	\mathcal H=-\frac{1}{2}\sum_{\mu=1} m_\mu^2.
\end{equation}
It is interesting to study the structure of the terms in the two different Hamiltonian \eqref{tinf} \eqref{correlation} when the spins configuration consists in a mixture of the retrieval states:
\begin{equation}
	\label{mixedsigma}
	\sigma_i=b_i\xi^1_i+(1-b_i)\xi^2_i 
\end{equation} 
where:
\begin{equation}
P(b_i)=\begin{cases}
	p,\quad b_i=1\\
	1-p,\quad b_i=0.
\end{cases}
\end{equation}
We observe that if $\boldsymbol \sigma=\boldsymbol\xi_1$ (i.e. $p=1$), $m_{2}\tilde{m}_2=0$ and $m_{1}\tilde{m}_1=1$, vice versa if $\boldsymbol\sigma=\boldsymbol\xi_2$ (i.e. $p=0$), $m_{2}\tilde{m}_2=1$ and $m_{1}\tilde{m}_1=0$.

The figure \eqref{minima} shows the square-root of the terms $m_\mu\tilde{m}_\mu=\frac{1}{N}(\xi_\mu\cdot\sigma)\frac{1}{N}(\tilde{\xi}_\mu\cdot \sigma)$ and $m^2_\mu=\frac{1}{N^2}(\xi_\mu\cdot \sigma)^2$ as a function of $p$. 
Thanks to the introduction of the decorrelation matrix $C^{-1}$, when the spins start to align with $\xi_1$ ($p\rightarrow0$), the contribution of the noisy  term in $\mathcal{H}_{KS}(\boldsymbol\sigma|\boldsymbol\xi)$ containing the magnetization of the second pattern, i.e. $\tilde{m}_2m_2$, goes down to zero as $p$ decreases (solid red line) while, as expected, $\tilde{m}_1m_1$ reaches $1$ (solid blue line).
Without dreaming ($t=0$), the noise contribution containing the magnetization associated to the second pattern $m_1$ continues to be greater than zero (dashed red line). Thus as the network approaches a retrieval state ($\xi_1$) the introduction of the decorrelation matrix removes the noise due to non retrieved patterns (in this case $\xi_2$). Of course the same situation occurs when the spins start to align with $\xi_2$ ($p\rightarrow1$) because of the exchange symmetry of the retrieval states.
\begin{center}
	\includegraphics[width=0.75\textwidth]{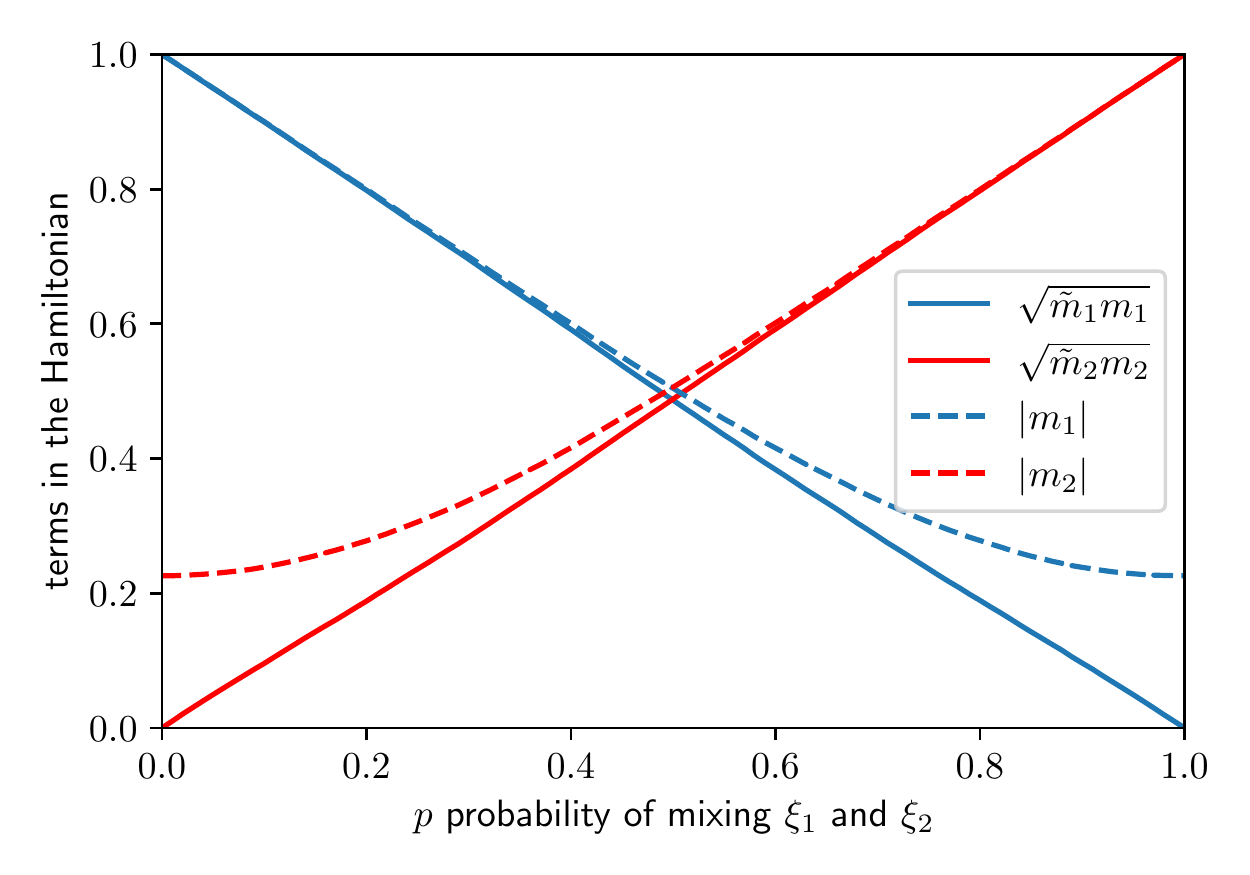}
	\captionof{figure}{\label{minima} This plot shows the square-root of the terms contained in the two Hamiltonians \eqref{tinf},\eqref{correlation} i.e. $|m_{\mu}|$ and $\sqrt{\tilde{m}_\mu m_\mu}$  as a function of the mixing parameter $p$ averaged over many realizations of the mixed state $\sigma(p)$ defined in \eqref{mixedsigma}.}
	\end{center}

\section{On the entropy of the dataset $\rho$} \label{DeepingEntropies}
To deepen the intimate relation between $\rho$ and the entropy of the dataset, we make some approximation to limit from above the probability of making mistakes in the reconstruction of the archetypes; we propose two approach: the first is base on the use of the Hoeffding inequality and the other is an application of the central limit theorem.
\begin{thm}\label{teoremagen}(Hoeffding inequality for bounded variables)
Let $X_{1},\dots,X_{N}$ be independent random variables such that $X_{i}\in[m_{i},M_{i}]\quad-\infty<m_{i}\leq M_{i}<+\infty\quad\forall i=1,\dots,N$.
Then $\forall t\geq0$
\begin{equation}
P\left(\sum_{i=1}^{N}(X_{i}-\mathbb{E}X_{i})\geq t\right)\leq\exp\left(-\frac{2t^{2}}{\sum_{i=1}^{N}(M_{i}-m_{i})^{2}}\right).
\end{equation}
\end{thm}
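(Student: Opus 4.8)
The plan is to follow the classical exponential-moment (Chernoff bound) method. First I would fix an auxiliary parameter $s > 0$ and, writing $S_N := \sum_{i=1}^N (X_i - \mathbb{E}X_i)$, apply Markov's inequality to the nonnegative random variable $e^{sS_N}$ to get
\begin{equation}
P(S_N \geq t) = P\left(e^{sS_N} \geq e^{st}\right) \leq e^{-st}\, \mathbb{E}\left[e^{sS_N}\right].
\end{equation}
By independence of the $X_i$ the moment generating function factorizes, $\mathbb{E}[e^{sS_N}] = \prod_{i=1}^N \mathbb{E}[e^{s(X_i - \mathbb{E}X_i)}]$, so the task reduces to controlling each single-variable factor.

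The key technical ingredient is Hoeffding's lemma: if $Y$ has $\mathbb{E}Y = 0$ and $Y \in [a,b]$ almost surely, then $\mathbb{E}[e^{sY}] \leq \exp\left(s^2 (b-a)^2 / 8\right)$ for every $s \in \mathbb{R}$. I would prove it by exploiting the convexity of $x \mapsto e^{sx}$, which gives $e^{sx} \leq \frac{b-x}{b-a} e^{sa} + \frac{x-a}{b-a} e^{sb}$ for $x \in [a,b]$; taking expectations and using $\mathbb{E}Y = 0$ yields $\mathbb{E}[e^{sY}] \leq \frac{b}{b-a} e^{sa} - \frac{a}{b-a} e^{sb} =: e^{\varphi(s)}$. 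A direct computation shows $\varphi(0) = \varphi'(0) = 0$, while $\varphi''(s)$ is the variance of a Bernoulli-type distribution supported on $\{a,b\}$ and is therefore bounded by $(b-a)^2/4$; Taylor's theorem with Lagrange remainder then gives $\varphi(s) \leq s^2(b-a)^2/8$.

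Applying the lemma to $Y_i := X_i - \mathbb{E}X_i$, which has zero mean and lies in an interval of length $M_i - m_i$, we obtain $\mathbb{E}[e^{sY_i}] \leq \exp\left(s^2 (M_i - m_i)^2 / 8\right)$ and hence
\begin{equation}
P(S_N \geq t) \leq \exp\left(-st + \frac{s^2}{8}\sum_{i=1}^N (M_i - m_i)^2\right).
\end{equation}
It then remains to optimize over $s > 0$: the exponent is a quadratic in $s$ minimized at $s^\star = 4t / \sum_{i=1}^N (M_i - m_i)^2$, and substituting this value returns exactly $\exp\left(-2t^2 / \sum_{i=1}^N (M_i - m_i)^2\right)$, as claimed (the case $t = 0$ being trivial since the right-hand side is then $1$). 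The only genuinely delicate point is Hoeffding's lemma --- in particular the uniform estimate $\varphi''(s) \leq (b-a)^2/4$ --- whereas all the remaining steps are routine bookkeeping.
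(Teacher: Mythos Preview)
Your proof is correct and follows the standard Chernoff--Hoeffding route: Markov's inequality on the exponential moment, independence to factorize, Hoeffding's lemma for each centered bounded summand, and optimization over the auxiliary parameter $s$. The sketch of Hoeffding's lemma via convexity and the Taylor estimate on $\varphi$ is also the usual one and is fine.

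For comparison with the paper: the paper does not actually prove this theorem. It is stated in the appendix purely as a known tool and then immediately applied to bound $P\left(\sum_{a=1}^{M}\chi_{i}^{a,\mu}\leq 0\right)$; no argument for the inequality itself is given. So there is nothing to compare your approach against --- you have supplied a complete proof where the paper simply quotes the result.
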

Given $\chi_{i}^{a,\mu}$ distributed as
\begin{equation}
\begin{cases}
	\mathcal{P}(\chi_{i}^{a,\mu}=+1) = \ensuremath{\frac{1+r}{2}}\\
\mathcal{P}(\chi_{i}^{a,\mu}=-1) =  \ensuremath{\frac{1-r}{2}},
\end{cases}
\end{equation}
in order to correctly reconstruct the archetype from the examples,
we want $M$ to be such that 
\begin{equation}\label{condizione}
P\left(\mathrm{sign}\left(\frac{1}{M}\sum_{a=1}^{M}\eta_{i}^{\mu,a}\right)=\xi_{i}^{\mu}\right)\geq	1-\epsilon,
\end{equation}
where we remember that  $\eta^{\mu,a}_{i}=\chi_{i}^{\mu,a}\xi^{\mu}_{i}$. The inequality in \eqref{condizione} is equivalent to the following one: \begin{equation}P\left(\sum_{a=1}^{M}\chi_{i}^{a,\mu}>0\right)\geq1-\epsilon.\end{equation}
The probability of the complementary event is then $P\left(\sum_{a=1}^{M}\chi_{i}^{a,\mu}\leq0\right)\leq\epsilon$.
At this point it is possible to get an upper bound, overestimating by exploiting the Hoeffding inequality as follows: 
\begin{equation}
\begin{split} P\left(\sum_{a=1}^{M}\chi_{i}^{a,\mu}\leq0\right)=P\left(-\sum_{a=1}^{M}\chi_{i}^{a,\mu}\geq0\right)=P\left(-\sum_{a=1}^{M}\left(\chi_{i}^{a,\mu}-r\right)>Mr\right)\leq\exp\left(-\frac{2M^{2}r^{2}}{4M}\right)=\exp\left(-\frac{Mr^{2}}{2}\right)
\end{split}
\end{equation}
and we require that
\begin{equation}
\exp\left(-\frac{Mr^{2}}{2}\right)\leq\epsilon\implies M\geq\frac{2}{r^{2}}\log\frac{1}{\epsilon}
\end{equation}
by which the scaling $Mr^{2} \in O(1)$ shines.\\
 If we want to estimate the success probability for $M \gg1$ with greater accuracy, we could approximate $\sum_{a=1}^{M}\chi_{i}^{a,\mu}$ with
\begin{equation}
\sum_{a=1}^{M}\chi_{i}^{a,\mu}\sim Mr+\lambda\sqrt{M(1-r^{2})},\,\text{\ensuremath{\lambda\sim\mathcal{N}(0,1)}}.
\end{equation}
Thus 
\begin{equation}
P\left(Mr+\lambda\sqrt{M(1-r^{2})}\geq0\right)=\frac{1}{\sqrt{2\pi M(1-r^{2})}}\int_{0}^{+\infty}\exp\left(-\frac{\left(\lambda-Mr\right)^{2}}{2M(1-r^{2})}\right)d\lambda=\frac{1}{2}\left(1+\mathrm{erf}\left[\frac{1}{\sqrt{2\rho}}\right]\right),
\end{equation}
and we get the probability to correctly reconstruct the pixel of the archetype under study given the dataset quality and quantity that reads as  
\begin{equation}
P\left(\xi|\{\eta\}\right)=\frac{1}{2}\left(1+\mathrm{erf}\left[\frac{1}{\sqrt{2\rho}}\right]\right);
\end{equation}
finally the conditional entropy for single pixel reads as
\begin{equation}
\begin{split}H\left(\xi|\{\eta\}\right)= & -\frac{1+\mathrm{erf}\left[(2\rho)^{-1/2}\right]}{2}\log_{2}\frac{1+\mathrm{erf}\left[(2\rho)^{-1/2}\right]}{2} -\frac{1-\mathrm{erf}\left[(2\rho)^{-1/2}\right]}{2}\log_{2}\frac{1-\mathrm{erf}\left[(2\rho)^{-1/2}\right]}{2}.
\end{split}
\end{equation}

\section{\label{gue}Guerra scheme for the LaD's quenched statistical pressure}

We look for a generalized quenched statistical pressure $A_{N,M}(\alpha,\beta,t,s)$
depending on the extra interpolating parameter $s\in[0,1]$ such that
for $s=1$ we recover the original model, i.e. $A_{N,M}(\alpha,\beta,t,s=1)$=
$A_{N,M}(\alpha,\beta,t)$ whereas $A_{N,M}(\alpha,\beta,t,s=0)$
is the quenched statistical pressure of a one-body model easier to
compute. Finally, by applying the fundamental theorem of calculus
we can recover the quenched statistical pressure (\ref{pressure}):
\begin{equation}
\mathcal{A}_{N,M}(\alpha,\beta,t)=\mathcal{A}_{N,M}(\alpha,\beta,t,s=0)+\int_{0}^{1}ds\frac{d}{ds}\mathcal{A}_{N,M}(\alpha,\beta,t,s)\label{fund}
\end{equation}

Before we can proceed, we integrate the linearized partition function
(\ref{linearz}) w.r.t. $z_{1}$ in order to preserve the signal:
\begin{equation}
\begin{split} & \mathcal{Z}_{N,M,K}(\beta|\boldsymbol{\chi},\boldsymbol{\xi},t)=\sum_{\{\sigma\}}\int\prod_{\mu}\left[\frac{dz_{\mu}}{\sqrt{2\pi}}\right]\int\prod_{i}\left[\frac{d\phi_{i}}{\sqrt{2\pi}}\right]\exp\left(-\frac{1}{2}\frac{1}{1+t}\sum_{\mu}z_{\mu}^{2}-\frac{1}{2}\sum_{i}\phi_{i}^{2}\right).\\
 & \cdot\exp\left[+\sqrt{\frac{\beta}{\Gamma N}}\frac{1}{M}\sum_{\mu,a,i}\xi_{i}^{\mu}\chi_{i}^{\mu,a}z_{\mu}k_{i}+\frac{\beta}{2\Gamma}N(1+t)\left(\frac{1}{NM}\sum_{a,i}\xi_{i}^{1}\chi_{i}^{1,a}k_{i}\right)^{2}\right],
\end{split}
\end{equation}

we introduce two source terms $J_{m}$ and $J_{\mu}$ to generate
the expectation value of the magnetization $\mu$ and $m$. Finally
the partition function can be rewritten as

\begin{equation}
\begin{split} & \mathcal{Z}_{N,M,K}(\beta|\boldsymbol{\chi},\boldsymbol{\xi},t)=\lim_{\boldsymbol{J}\to0}\sum_{\{\sigma\}}\int\prod_{\mu}\left[\frac{dz_{\mu}}{\sqrt{2\pi}}\right]\int\prod_{i}\left[\frac{d\phi_{i}}{\sqrt{2\pi}}\right]\exp\left(-\frac{1}{2}\frac{1}{1+t}\sum_{\mu}z_{\mu}^{2}-\frac{1}{2}\sum_{i}\phi_{i}^{2}\right)\cdot\\
 & \cdot\exp\left[\sqrt{\frac{\beta}{\Gamma N}}\frac{1}{M}\sum_{\mu,a,i}\xi_{i}^{\mu}\chi_{i}^{\mu,a}z_{\mu}k_{i}+\frac{\beta}{2\Gamma}N(1+t)\eta^{2}+NJ_{m}m+NJ_{\mu}\mu\right]
\end{split}
\end{equation}

Moreover, under the hypothesis $M\gg 1$, we can approximate the term $\frac{1}{M}\sum_{a}\xi_{i}^{\mu}\chi_{i}^{\mu,a}$
with the Gaussian variable:
\begin{equation}
\frac{1}{M}\sum_{a}\xi_{i}^{\mu}\chi_{i}^{\mu,a}\sim\lambda_{i}^{\mu}\sqrt{r^{2}+\frac{1}{M}(1-r^{2})}\quad\text{with }\lambda_{i}^{\mu}\sim\mathcal{N}(0,1)\quad\text{\ensuremath{\mu=2,\dots,K}}
\end{equation}

such that it will replace the usual noise term $\xi_{i}^{\mu}$ in
the quenched pressure.
\begin{defn}
The Guerra interpolating quenched pressure is
\begin{equation}
\begin{split} & \mathcal{A}_{N,M}(\alpha,\beta,t,s)=\mathbb{E}_{\chi}\mathbb{E}_{\lambda}\mathbb{E}_{\theta}\mathbb{E}_{\psi}\frac{1}{N}\log\sum_{\{\sigma\}}\int\prod_{\mu>1}\left[\frac{dz_{\mu}}{\sqrt{2\pi}}\right]\int\prod_{i}\left[\frac{d\phi_{i}}{\sqrt{2\pi}}\right]\exp\left\{ -\frac{\left[1+B(1-s)\right]}{2}\frac{1}{1+t}\sum_{\mu>1}z_{\mu}^{2}-\frac{1}{2}\sum_{i}\phi_{i}^{2}\right\} \cdot\\
 & \cdot\exp\left\{ +\sqrt{s}\sqrt{\frac{\beta}{N}}\sum_{\mu>1,i}\lambda_{i}^{\mu}k_{i}z_{\mu}+s\frac{\beta}{2r^{2}}\Gamma N(1+t)\eta^{2}\right\}\cdot \\
 &\cdot \exp\left[NJ_{\mu}\mu+NJ_{m}m+C(1-s)\sum_{i}k_{i}^{2}+D\sqrt{(1-s)}\sum_{i}\psi_{i}k_{i}+E(1-s)N\eta+F\sqrt{1-s}\sum_{\mu=2}\theta_{\mu}z_{\mu}\right]
\end{split}
\end{equation}
where 
\begin{equation}
\mathbb{E}_{\theta}f(\theta)=\prod_{\mu=2}^{K}\mathbb{E}_{\theta_{\mu}}f(\theta),\quad\mathbb{E}_{\psi}g(\psi)=\prod_{i=1}^{N}\mathbb{E}_{\psi_{i}}g(\psi)
\end{equation}

with
\begin{equation}
\mathbb{E}_{\theta_{\mu}}f(\theta)=\frac{1}{\sqrt{2\pi}}\int d\theta_{\mu}\exp\left(-\frac{\theta_{\mu}^{2}}{2}\right)f(\theta),\quad\mathbb{E}_{\psi_{i}}g(\psi)=\frac{1}{\sqrt{2\pi}}\int d\psi_{i}\exp\left(-\frac{\psi_{i}^{2}}{2}\right)g(\psi);
\end{equation}

and $B,C,D,E,F\in\mathbb{R}$ some constants to be fixed later on.
\end{defn}

\begin{rem}
To simplify the notation we introduce the generic operator $\mathbb{E}$:
\begin{equation}
\mathbb{E}:=\mathbb{E}_{\chi}\mathbb{E}_{\lambda}\mathbb{E}_{\theta}\mathbb{E}_{\psi}.
\end{equation}
\end{rem}

\begin{defn}
The Boltzmann average $\omega(O(\boldsymbol{\sigma}))$ of an operator
$O(\boldsymbol{\sigma})$ depending on the spin configuration $\boldsymbol{\sigma}=(\sigma_{1},\dots,\sigma_{N})$
is
\begin{equation}
\omega(O(\boldsymbol{\sigma})):=\frac{\sum_{\sigma}O(\boldsymbol{\sigma})\exp\left(-\beta\mathcal{H}_{N,M,K}(\boldsymbol{\sigma}|\boldsymbol{\chi},\boldsymbol{\xi},t)\right)}{\mathcal{Z}_{N,M,K}(\beta|\boldsymbol{\chi},\boldsymbol{\xi},t)},
\end{equation}
likewise, $\omega_{s}(\cdot)$ is the Boltzmann operator on the extended
Guerra framework. 
Further, we define the following operator
\begin{equation}
\langle O\rangle:=\mathbb{E}\omega_{s}(O).
\end{equation}
to be the mean of an observable $O$ with respect to the fast noise
(Boltzmann average) and the quenched noise (quenched average).
\end{defn}

We proceed by calculating 

\begin{equation}
\begin{split} & \frac{d\mathcal{A}_{N,M}(s)}{ds}=\frac{B}{2N(1+t)}\mathbb{E}\omega_{s}(\sum_{\mu>1}z_{\mu}^{2})+\frac{1}{2N\sqrt{s}}\sqrt{\frac{\beta}{N}}\mathbb{E}\omega_{s}(\sum_{\mu>1,i}\lambda_{i}^{\mu}k_{i}z_{\mu})+\frac{\beta\Gamma}{2r^{2}}(1+t)\mathbb{E}\omega_{s}(\eta^{2})+\\
 & -\frac{C}{N}\mathbb{E}\omega_{s}(\sum_{i}k_{i}^{2})-\frac{D}{2N\sqrt{(1-s)}}\mathbb{E}\omega_{s}(\sum_{i}\psi_{i}k_{i})-E\mathbb{E}\omega_{s}(\eta)-\frac{F}{2N\sqrt{(1-s)}}\mathbb{E}\omega_{s}(\sum_{\mu=2}\theta_{\mu}z_{\mu}),
\end{split}
\end{equation}

we apply Wick's theorem for Gaussian operator in the last expression
and we get
\begin{equation}
\begin{split} & \frac{d\mathcal{A}_{N,M}(s)}{ds}=\frac{B}{2N(1+t)}\mathbb{E}\omega_{s}(\sum_{\mu=2}z_{\mu}^{2})+\frac{1}{2N\sqrt{s}}\sqrt{\frac{\beta}{N}}\sum_{\mu>1,i}\partial_{\lambda_{i}^{\mu}}\omega_{s}(k_{i}z_{\mu})+\frac{\beta\Gamma}{2r^{2}}(1+t)\mathbb{E}\omega_{s}(\eta^{2})+\\
 & -\frac{C}{N}\mathbb{E}\omega_{s}(\sum_{i}k_{i}^{2})-\frac{D}{2N\sqrt{(1-s)}}\mathbb{E}\sum_{i}\partial_{\psi_{i}}\omega_{s}(k_{i})-E\mathbb{E}\omega_{s}(\eta)-\frac{F}{2N\sqrt{(1-s)}}\mathbb{E}\sum_{\mu=2}\partial_{\theta_{\mu}}\omega_{s}(z_{\mu}).
\end{split}
\label{inter}
\end{equation}

By using the fact that

\begin{equation}
\partial_{\lambda_{i}^{\mu}}\omega_{s}(k_{i}z_{\mu})=\sqrt{s}\sqrt{\frac{\beta}{N}}\left(\omega_{s}(k_{i}^{2}z_{\mu}^{2})-\omega_{s}(k_{i}z_{\mu})^{2}\right),
\end{equation}
\begin{equation}
\partial_{\psi_{i}}\omega_{s}(k_{i})=D\sqrt{(1-s)}\left(\omega_{s}(k_{i}^{2})-\omega_{s}(k_{i})^{2}\right),
\end{equation}
\begin{equation}
\partial_{\theta_{\mu}}\omega_{s}(z_{\mu})=F\sqrt{1-s}\left(\omega_{s}(z_{\mu}^{2})-\omega_{s}(z_{\mu})^{2}\right)
\end{equation}
 equation (\ref{inter}) becomes

\begin{equation}
\begin{split} & \frac{d\mathcal{A}_{N,M}(s)}{ds}=\frac{B}{2N(1+t)}\mathbb{E}\omega_{s}(\sum_{\mu=2}z_{\mu}^{2})+\frac{1}{2N}\frac{\beta}{N}\sum_{\mu>1,i}\left(\omega_{s}(k_{i}^{2}z_{\mu}^{2})-\omega_{s}(k_{i}z_{\mu})^{2}\right)+\frac{\beta\Gamma}{2r^{2}}(1+t)\mathbb{E}\omega_{s}(\eta^{2})+\\
 & -\frac{C}{N}\mathbb{E}\omega_{s}(\sum_{i}k_{i}^{2})-\frac{D^{2}}{2N}\mathbb{E}\sum_{i}\left(\omega_{s}(k_{i}^{2})-\omega_{s}(k_{i})^{2}\right)-E\mathbb{E}\omega_{s}(\eta)-\frac{F^{2}}{2N}\mathbb{E}\sum_{\mu=2}\left(\omega_{s}(z_{\mu}^{2})-\omega_{s}(z_{\mu})^{2}\right).
\end{split}
\end{equation}

The last equation can be rewritten in terms of the order parameters
of the system to get
\begin{equation}
\begin{split} & \frac{d\mathcal{A}_{N,M}(s)}{ds}=-\alpha\frac{B}{2(1+t)}\langle p_{11}\rangle+\frac{\alpha\beta}{2}(\langle p_{11}q_{11}\rangle-\langle p_{12}q_{12}\rangle)+\frac{\beta\Gamma}{2r^{2}}(1+t)\langle\eta^{2}\rangle-C\langle q_{11}\rangle+\\
 & -\frac{D^{2}}{2}(\langle q_{11}\rangle-\langle q_{12}\rangle)-E\langle\eta\rangle-\frac{F^{2}}{2}\alpha(\langle p_{11}\rangle-\langle p_{12}\rangle).
\end{split}
\label{deriv}
\end{equation}

By fixing the real function $B,C,D,E,F$ as follows

\begin{equation}
\begin{split} & B=-\beta(\bar{Q}-\bar{q})(1+t),\\
 & C=\frac{\alpha\beta}{2}(\bar{P}-\bar{p}),\\
 & D=\sqrt{\alpha\beta\bar{p}},\\
 & E=\frac{\beta\Gamma}{r^{2}}(1+t)\bar{\eta},\\
 & F=\sqrt{\beta\bar{q}}.
\end{split}
\label{constant}
\end{equation}

equation (\ref{deriv}) can be written only as a function of variances
of the order parameters which are:
\begin{equation}
\Delta_{p_{11}q_{11}}:=\langle(p_{11}-\bar{p})(q_{11}-\bar{q})\rangle=\langle p_{11}q_{11}\rangle-\bar{Q}\langle p_{11}\rangle-\bar{P}\langle q_{11}\rangle+\bar{P}\bar{Q},
\end{equation}
\begin{equation}
\Delta_{p_{12}q_{12}}:=\langle(p_{12}-\bar{p})(q_{12}-\bar{q})\rangle=\langle p_{12}q_{12}\rangle-\bar{q}\langle p_{12}\rangle-\bar{p}\langle q_{12}\rangle+\bar{p}\bar{q},
\end{equation}
\begin{equation}
\Delta_{\eta^{2}}:=\langle(\eta-\bar{\eta})^{2}\rangle=\langle\eta^{2}\rangle-2\bar{\eta}\langle\eta\rangle+\bar{\eta}^{2}.
\end{equation}
By direct substitution of (\ref{constant}) equation (\ref{deriv})
becomes:

\begin{equation}
\begin{split} & \frac{d\mathcal{A}_{N,M}(s)}{ds}=\frac{\alpha\beta}{2}(\bar{p}\bar{q}-\bar{P}\bar{Q})-\frac{\beta\Gamma}{2r^{2}}(1+t)\bar{\eta}^{2}+\frac{\alpha\beta}{2}(\Delta_{p_{11}q_{11}}-\Delta_{p_{12}q_{12}})+\frac{\beta\Gamma}{2r^{2}}(1+t)\Delta_{\eta^{2}}\end{split}
.\label{finitN}
\end{equation}

\begin{rem}
Under replica symmetric assumption the fluctuations of the order parameters
around their mean values, i.e. $\Delta_{p_{11}q_{11},p_{12}q_{12},\eta^{2}},$
can be discarded in the thermodynamic limit; thus by performing the
limit $N\to+\infty$ equation (\ref{finitN}) becomes:
\begin{equation}
\begin{split} & \frac{d\mathcal{A}_{M}(s)}{ds}=\frac{\alpha\beta}{2}(\bar{p}\bar{q}-\bar{P}\bar{Q})-\frac{\beta\Gamma}{2r^{2}}(1+t)\bar{\eta}^{2}\end{split}
.\label{derl}
\end{equation}
We proceed with the calculation of 
\end{rem}

\begin{equation}
\begin{split} & \mathcal{A}_{N,M}(s=0)=\mathbb{E}\frac{1}{N}\log\sum_{\{\sigma\}}\int\prod_{\mu>1}\left[\frac{dz_{\mu}}{\sqrt{2\pi}}\right]\int\prod_{i}\left[\frac{d\phi_{i}}{\sqrt{2\pi}}\right]\exp\left\{ -\frac{(1+B)}{2}\frac{1}{1+t}\sum_{\mu>1}z_{\mu}^{2}-\frac{1}{2}\sum_{i}\phi_{i}^{2}+\right\} \cdot\\
 & \exp\left[NJ_{\mu}\mu+NJ_{m}m+C\sum_{i}k_{i}^{2}+D\sum_{i}\psi_{i}k_{i}+EN\eta+F\sum_{\mu=2}\theta_{\mu}z_{\mu}\right].
\end{split}
\end{equation}

It is convenient to split the argument of the logarithm into two terms

\begin{equation}
A^{\prime}=\int\prod_{\mu>1}\left[\frac{dz_{\mu}}{\sqrt{2\pi}}\right]\exp\left[-\frac{1+B}{2}\frac{1}{1+t}\sum_{\mu>1}z_{\mu}^{2}+F\sum_{\mu>1}\theta_{\mu}z_{\mu}\right]=\prod_{\mu=2}\left\{ \sqrt{\frac{1+t}{1+B}}\exp\left(\frac{F^{2}\theta_{\mu}^{2}(1+t)}{2(1+B)}\right)\right\} 
\end{equation}
and

\begin{equation}
A^{\prime\prime}=\prod_{i}\sum_{\sigma_{i}=\pm1}\int\left[\frac{d\phi_{i}}{\sqrt{2\pi}}\right]\exp\left(-\frac{1}{2}\sum_{i}\phi_{i}^{2}+NJ_{\mu}\mu+NJ_{m}m+C\sum_{i}k_{i}^{2}+D\sum_{i}\psi_{i}k_{i}+EN\eta\right),
\end{equation}

such that
\begin{equation}
\mathcal{A}_{N,M}(s=0)=\mathbb{E}_{\lambda}\mathbb{E}_{\theta}\mathbb{E}_{\psi}\frac{1}{N}\log\left(A^{\prime}A^{\prime\prime}\right).\label{prod}
\end{equation}
 Before $A^{\prime\prime}$ can be done, the multi-spin $k_{i}$ has
to be replaced 
\begin{equation}
\begin{split} & A^{\prime\prime}=\prod_{i}\sum_{\sigma_{i}=\pm1}\int\left[\frac{d\phi_{i}}{\sqrt{2\pi}}\right]\exp\left\{ \sum_{i}\left[\xi_{i}^{1}\left(J_{m}+J_{\mu}\frac{r}{M\Gamma}\sum_{a}\chi_{i}^{1,a}\right)+D\psi_{i}+E\frac{r}{M\Gamma}\sum_{a}\xi_{i}^{1}\chi_{i}^{1,a}\right]\sigma_{i}\right\} \cdot\\
 & \cdot\exp\left[-\frac{1}{2}\sum_{i}\phi_{i}^{2}\left(1+\frac{2tC}{\beta(1+t)}\right)+CN+\sum_{i}i\sqrt{\frac{t}{\beta(1+t)}}\phi_{i}\left(D\psi_{i}+E\frac{r}{M\Gamma}\sum_{a}\xi_{i}^{1}\chi_{i}^{1,a}+2C\sigma_{i}\right)\right].
\end{split}
\end{equation}
By direct integration on the variables $\phi_{i}$ the last becomes:
\begin{equation}
\begin{split} & A^{\prime\prime}=\prod_{i}\sum_{\sigma_{i}=\pm1}\sqrt{\frac{1}{\left(1+\frac{2tC}{\beta(1+t)}\right)}}\exp\left(-\frac{1}{2}\frac{t}{2tC+\beta(1+t)}\left(D\psi_{i}+E\frac{r}{M\Gamma}\sum_{a}\xi_{i}^{1}\chi_{i}^{1,a}+2C\sigma_{i}\right)^{2}\right)\cdot\\
 & \cdot\exp\left\{ C+\left[\xi_{i}^{1}\left(J_{m}+J_{\mu}\frac{r}{M\Gamma}\sum_{a}\chi_{i}^{1,a}\right)+D\psi_{i}+E\frac{r}{M\Gamma}\sum_{a}\xi_{i}^{1}\chi_{i}^{1,a}\right]\sigma_{i}\right\} .
\end{split}
\end{equation}

Once performed the sum over the spins configuration $\sum_{\sigma}$,
equation (\ref{prod}) becomes:

\begin{equation}
\begin{split} & A_{N,M}(s=0)=-\frac{1}{2}\log\left(1+\frac{2tC}{\beta(1+t)}\right)+\alpha\log\left\{ \sqrt{\frac{1+t}{1+B}}\right\} +\frac{\alpha}{2}\frac{F^{2}(1+t)}{1-B}+\\
 & +\mathbb{E}\left\{ -\frac{1}{2}\frac{t}{2tC+\beta(1+t)}\left[D^{2}\psi^{2}+2DE\psi\frac{r}{M\Gamma}\sum_{a}\xi^{1}\chi^{1,a}+\left(E\frac{r}{M\Gamma}\sum_{a}\xi^{1}\chi^{1,a}\right)^{2}+4C^{2}\right]+NC\right\} +\\
 & +\log2+\mathbb{E}\log\cosh\left[\xi^{1}\left(J_{m}+J_{\mu}\frac{r}{M\Gamma}\sum_{a}\chi^{1,a}\right)+D\psi+E\frac{r}{M\Gamma}\sum_{a}\xi^{1}\chi^{1,a}-\frac{1}{2}\frac{4Ct(D\psi+E\frac{r}{M\Gamma}\sum_{a}\xi^{1}\chi^{1,a})}{2tC+\beta(1+t)}\right].
\end{split}
\end{equation}
By using $\mathbb{E_{\chi}}\frac{1}{M^{2}}\sum_{a,b}\chi^{1,a}\chi^{1,b}=\Gamma$
into the last equation we get

\begin{equation}
\begin{split} & \mathcal{A}_{N,M}(s=0)=-\frac{1}{2}\log\left(1+\frac{2tC}{\beta(1+t)}\right)-\frac{1}{2}\frac{t}{2tC+\beta(1+t)}\left(D^{2}+4C^{2}+E^{2}\frac{r^{2}}{\Gamma}\right)+\frac{\alpha}{2}\log\left\{ \frac{1+t}{1-B}\right\} +C+\log2+\\
 & +\mathbb{E}\log\cosh\left[\xi^{1}\left(J_{m}+J_{\mu}\frac{r}{M\Gamma}\sum_{a}\chi^{1,a}\right)+D\psi+E\frac{r}{M\Gamma}\sum_{a}\xi^{1}\chi^{1,a}-\frac{1}{2}\frac{4Ct(D\psi+E\frac{r}{M\Gamma}\sum_{a}\xi^{1}\chi^{1,a})}{2tC+\beta(1+t)}\right]+\frac{\alpha}{2}\frac{F^{2}(1+t)}{1-B}.
\end{split}
\label{oneb}
\end{equation}

Finally, by merging the results in equations (\ref{oneb}) and (\ref{finitN})
into (\ref{fund}) and then by perfoming the thermodynamic limit we
obtain the expression of the quenched statistical pressure of the
LAD model:

\begin{equation}
\begin{split} & \mathcal{A}_{M}(\alpha,\beta,t)=-\frac{1}{2}\log\left(1+\frac{\alpha t(\bar{P}-\bar{p})}{(1+t)}\right)-\frac{1}{2}\frac{t}{\alpha t(\bar{P}-\bar{p})+(1+t)}\left(\alpha\bar{p}+\alpha^{2}\beta(\bar{P}-\bar{p})^{2}+\frac{\beta\Gamma}{r^{2}}(1+t)^{2}\bar{\eta}^{2}\right)+\\
 & +\mathbb{E}\log\cosh\left[\xi^{1}\left(J_{m}+J_{\mu}\frac{r}{M\Gamma}\sum_{a}\chi^{1,a}\right)+\frac{\sqrt{\alpha\beta\bar{p}}\psi+(1+t)\frac{\beta}{r}\bar{\eta}\frac{1}{M}\sum_{a}\xi^{1}\chi^{1,a}}{1+(\bar{P}-\bar{p})\frac{\alpha t}{1+t}}\right]+\log2+\\
 & +\frac{\alpha}{2}\log\left\{ \frac{1+t}{1+\beta(\bar{Q}-\bar{q})(1+t)}\right\} +\frac{\alpha}{2}\frac{\beta\bar{q}(1+t)}{1+\beta(\bar{Q}-\bar{q})(1+t)}+\frac{\alpha\beta}{2}(\bar{P}-\bar{p})+\frac{\alpha\beta}{2}(\bar{p}\bar{q}-\bar{P}\bar{Q})-\frac{\beta\Gamma}{2r^{2}}(1+t)\bar{\eta}^{2}
\end{split}
\label{aft}
\end{equation}

We introduce the order parameter $\bar{D}$ :

\begin{equation}
\bar{D}:=1+\alpha\frac{t}{(1+t)}(\bar{P}-\bar{p})
\end{equation}

to be used instead of $\bar{P}$ from now and that simplify the expression
of the quenched statistical pressure.

We rewrite the quenched statistical pressure (\ref{aft}) as a function
of $\bar{D}$ and get:

\begin{equation}
\begin{split} & \mathcal{A}_{M}(\alpha,\beta,t)=-\frac{1}{2}\log\bar{D}-\frac{\alpha\bar{p}}{2\bar{D}}\frac{t}{1+t}+\mathbb{E}\log\cosh\left[\xi^{1}\left(J_{m}+J_{\mu}\frac{r}{M\Gamma}\sum_{a}\chi^{1,a}\right)+\frac{\sqrt{\alpha\beta\bar{p}}\psi+(1+t)\frac{\beta}{r}\bar{\eta}\frac{1}{M}\sum_{a}\chi^{1,a}}{\bar{D}}\right]+\\
 & +\frac{\alpha}{2}\log\left[\frac{1+t}{1-\beta(\bar{Q}-\bar{q})(1+t)}\right]+\frac{\alpha}{2}\frac{\beta\bar{q}(1+t)}{1-\beta(\bar{Q}-\bar{q})(1+t)}+\frac{\beta}{2}\frac{\bar{D}-1}{\bar{D}}\frac{(1+t)}{t}+\log2+\frac{\alpha\beta}{2}\bar{p}(\bar{q}-\bar{Q})+\\
 & -\frac{\beta}{2}\bar{Q}(\bar{D}-1)\frac{(1+t)}{t}-\frac{\beta\Gamma}{2r^{2}}(1+t)\bar{\eta}^{2}\left(\frac{t+\bar{D}}{\bar{D}}\right).
\end{split}
\label{pressa}
\end{equation}

\section{Self-consistent equations} \label{selfc}

Thanks to the introduction of the auxiliary fields we can evaluate
the magnetization of the example and of the archetype by taking the
partial derivative of the quenched statistical pressure with respect
to the sources $J_{\mu}$and $J_{m}$:

\begin{equation}
\frac{\partial\mathcal{A}_{M}(\alpha,\beta,t)}{\partial J_{\mu}}=\mathbb{E}\tanh\left(J_{m}+J_{\mu}\frac{r}{M\Gamma}\sum_{a}\chi^{1,a}+\frac{\sqrt{\alpha\beta\bar{p}}\psi+(1+t)\frac{\beta}{r}\bar{\eta}\frac{1}{M}\sum_{a}\chi^{1,a}}{\bar{D}}\right)\frac{r}{M\Gamma}\sum_{a}\chi^{1,a}=\bar{\mu}\label{one1}
\end{equation}
\begin{equation}
\frac{\partial\mathcal{A}_{M}(\alpha,\beta,t)}{\partial J_{m}}=\mathbb{E}\tanh\left(J_{m}+J_{\mu}\frac{r}{M\Gamma}\sum_{a}\chi^{1,a}+\frac{\sqrt{\alpha\beta\bar{p}}\psi+(1+t)\frac{\beta}{r}\bar{\eta}\frac{1}{M}\sum_{a}\chi^{1,a}}{\bar{D}}\right)=\bar{m}
\end{equation}

The self-consistent equations of the LAD model are obtained by looking
for the stationary point of the quenched statistical pressure (\ref{pressa}),
namely by setting $\left.\bigtriangledown\mathcal{A}_{M}(\alpha,\beta,t)\right|_{\bar{\eta},\bar{p},\bar{q},\bar{Q},\bar{D}}=0$.

We start by imposing the stationarity condition with respect to the
magnetization $\bar{\eta}$:

\begin{equation}
\frac{\partial\mathcal{A}_{M}(\alpha,\beta,t)}{\partial\bar{\eta}}=0
\end{equation}

from which we get
\begin{equation}
\mathbb{E}\tanh\left(J_{m}+J_{\mu}\frac{r}{M\Gamma}\sum_{a}\chi^{1,a}+\frac{\sqrt{\alpha\beta\bar{p}}\psi+(1+t)\frac{\beta}{r}\bar{\eta}\frac{1}{M}\sum_{a}\chi^{1,a}}{\bar{D}}\right)\frac{r}{M\Gamma}\sum_{a}\chi^{1,a}=\bar{\eta}\left(t+\bar{D}\right).\label{one2}
\end{equation}

By comparing equations (\ref{one1}) and (\ref{one2}) it follows
that 
\begin{equation}
\bar{\eta}=\frac{\bar{\mu}}{t+\bar{D}}.
\end{equation}

Thus we can rewrite the quenched statistical pressure (\ref{pressa})
as a function of the magnetization of the examples $\bar\mu$ :

\begin{equation}
\begin{split} & \mathcal{A}_{M}(\alpha,\beta,t)=-\frac{1}{2}\log\bar{D}-\frac{\alpha\bar{p}}{2\bar{D}}\frac{t}{1+t}+\mathbb{E}\log\cosh\left(\frac{\sqrt{\alpha\beta\bar{p}}\psi+\frac{1+t}{t+D}\frac{\beta}{r}\bar{\mu}\frac{1}{M}\sum_{a}\chi^{1,a}}{\bar{D}}\right)+\\
 & +\frac{\alpha}{2}\log\left[\frac{1+t}{1-\beta(\bar{Q}-\bar{q})(1+t)}\right]+\frac{\alpha}{2}\frac{\beta\bar{q}(1+t)}{1-\beta(\bar{Q}-\bar{q})(1+t)}+\frac{\beta}{2}\frac{\bar{D}-1}{\bar{D}}\frac{(1+t)}{t}+\log2+\frac{\alpha\beta}{2}\bar{p}(\bar{q}-\bar{Q})+\\
 & -\frac{\beta}{2}\bar{Q}(\bar{D}-1)\frac{(1+t)}{t}-\frac{\beta\Gamma}{2\bar{D}r^{2}}\frac{1+t}{\bar{D}+t}\bar{\mu}^{2}.
\end{split}
\label{presmu}
\end{equation}

where we have set the sources $J_{m}$and $J_{\mu}$ equal to zero since
we have already exploited them.

In the large dataset hypothesis $M\gg1$ we can approximate the quantity
$\frac{1}{M}\sum_{a}\chi^{1,a}$in the following way
\begin{equation}
\frac{1}{M}\sum_{a}\chi^{1,a}\sim r+z\sqrt{\frac{1-r^{2}}{M}}\,\text{where \ensuremath{z\sim}\ensuremath{\mathcal{N}(0,1)}},
\end{equation}

thus equation (\ref{presmu}) becomes:
\begin{equation}
\begin{split} & \mathcal{A}_{M}(\alpha,\beta,t)=-\frac{1}{2}\log\bar{D}-\frac{\alpha\bar{p}}{2\bar{D}}\frac{t}{1+t}+\mathbb{E}_{z}\log\cosh\left[\frac{\sqrt{\alpha\beta\bar{p}}\psi+\frac{1+t}{t+D}\frac{\beta}{r}\bar{\mu}\left(r+z\sqrt{\frac{1-r^{2}}{M}}\right)}{\bar{D}}\right]+\\
 & +\frac{\alpha}{2}\log\left[\frac{1+t}{1-\beta(\bar{Q}-\bar{q})(1+t)}\right]+\frac{\alpha}{2}\frac{\beta\bar{q}(1+t)}{1-\beta(\bar{Q}-\bar{q})(1+t)}+\frac{\beta}{2}\frac{\bar{D}-1}{\bar{D}}\frac{(1+t)}{t}+\log2+\frac{\alpha\beta}{2}\bar{p}(\bar{q}-\bar{Q})+\\
 & -\frac{\beta}{2}\bar{Q}(\bar{D}-1)\frac{(1+t)}{t}-\frac{\beta\Gamma}{2\bar{D}r^{2}}\frac{1+t}{\bar{D}+t}\bar{\mu}^{2}.
\end{split}
\end{equation}

where
\begin{equation}
\mathbb{E}_{z}y(z)=\frac{1}{\sqrt{2\pi}}\int_{\mathbb{R}}dz\exp\left(-\frac{z^{2}}{2}\right)y(z).
\end{equation}

By using the following relation:

\begin{equation}
\mathbb{E}_{\psi}\mathbb{E}_{z}F(a\psi+bz+c)=\mathbb{E_{\psi}}F\left(\sqrt{a^{2}+b^{2}}\psi+c\right),\quad a,b,c\in\mathbb{R}
\end{equation}

the quenched statistical pressure can be rewritten as follows

\begin{equation}
\begin{split} & \mathcal{A}_{M}(\alpha,\beta,t)=-\frac{1}{2}\log\bar{D}-\frac{\alpha\bar{p}}{2\bar{D}}\frac{t}{1+t}+\mathbb{E_{\psi}}\log\cosh\left[\frac{1}{\bar{D}}\sqrt{\alpha\beta\bar{p}+\left(\frac{\beta\mu}{\bar{D}}\frac{1+t}{\bar{D}+t}\right)^{2}\rho}\psi+\frac{\beta\mu}{\bar{D}}\frac{1+t}{\bar{D}+t}\right]+\\
 & +\frac{\alpha}{2}\log\left[\frac{1+t}{1-\beta(\bar{Q}-\bar{q})(1+t)}\right]+\frac{\alpha}{2}\frac{\beta\bar{q}(1+t)}{1-\beta(\bar{Q}-\bar{q})(1+t)}+\frac{\beta}{2}\frac{\bar{D}-1}{\bar{D}}\frac{(1+t)}{t}+\log2+\frac{\alpha\beta}{2}\bar{p}(\bar{q}-\bar{Q})+\\
 & -\frac{\beta}{2}\bar{Q}(\bar{D}-1)\frac{(1+t)}{t}-\frac{\beta}{2\bar{D}}\frac{1+t}{\bar{D}+t}\bar{\mu}^{2}(1+\rho).
\end{split}
\end{equation}
Now we evaluate the remaining self-consistent equations:

by imposing $\frac{\partial\mathcal{A}}{\partial\bar{p}}=0$ we get

\begin{equation}
D^{2}(\bar{Q}-\bar{q})=1-\hat{q}-\frac{1}{\beta}\frac{tD}{(1+t)},
\end{equation}
from the condition $\frac{\partial\mathcal{A}}{\partial\bar{q}}=0$
it follows that
\begin{equation}
\bar{p}=\frac{\beta\bar{q}(1+t)^{2}}{\left[1-\beta(\bar{Q}-\bar{q})(1+t)\right]^{2}},
\end{equation}
we impose $\frac{\partial\mathcal{A}}{\partial\bar{Q}}=0$ and we
obtain
\begin{equation}
D=1+\frac{\alpha t}{1-\beta(\bar{Q}-\bar{q})(1+t)},
\end{equation}
 finally by setting $\frac{\partial\mathcal{A}}{\partial D}=0$ we
get
\begin{equation}
D^{2}\bar{Q}=-\frac{D}{\beta}\frac{t}{1+t}+\frac{\alpha\bar{p}}{\beta}\left(\frac{t}{1+t}\right)^{2}-(1-\hat{q})\frac{\alpha2\bar{p}}{D}\frac{t}{1+t}+1-t\frac{2D+t}{\left(t+D\right)^{2}}\bar{\mu}^{2}(1+\rho).
\end{equation}

\section{Noiseless limits of the self-consistent equations} \label{limit}

In order to perform the limit $\beta\to+\infty$ on the self-consistent
equations we observe that in the vanishing temperature limit the quantity
$\beta(\bar{Q}-\bar{q})$ and $\beta(1-\hat{q})$ are finite constant
thus $\bar{Q}\rightarrow\bar{q}$ and $\hat{q}\rightarrow1$ as $\beta\to+\infty$.
Thus we introduce:
\begin{equation}
\delta_{q}:=\beta(\bar{Q}-\bar{q}),
\end{equation}
\begin{equation}
\delta_{1}:=\beta(1-\hat{q}).
\end{equation}

By rewriting the self-consistent equations in function of of $\delta_{q}$ and
$\delta_{1}$ and then by replacing $\bar{p}=\frac{\beta}{\Pi^{2}}$, $\mu=\frac{\tilde{\mu}}{\Pi}$
and $\delta_{1}=\tilde{\delta_{1}}\bar{D}$, it is easy to perform the vanishing temperature
limit. After some simple passages the self-consistent equations in
the limit $\beta\to+\infty$ become

\begin{equation}\begin{split}
\bar{D}&=1+\frac{\alpha t}{1-(1+t)\delta_{q}},\\
\bar{q}&=\frac{1}{\Pi^{2}}\left[\frac{1}{1+t}-\delta_{q}\right]^{2},\\
\bar{m}&=Erf\left(\frac{\mu}{G}\right),\\
\bar\mu&=\frac{\bar{m}\Pi}{1+\rho\left(1-\delta_{1}\frac{1+t}{\bar{D}+t}\right)},\\
\bar{D}^{2}\frac{1}{\Pi^{2}}\left[\frac{1}{1+t}-\delta_{q}\right]^{2}&=1-t\frac{\mu^{2}}{\Pi^{2}}(1+\rho)\frac{2\bar{D}+t}{(\bar{D}+t)^{2}}+\alpha\frac{1}{\Pi^{2}}\frac{t^{2}}{(1+t)^{2}}-\delta_{1}\frac{2t}{(1+t)}\alpha\frac{1}{\Pi^{2}},\\
\delta_{q}&=\frac{\delta_{1}}{\bar{D}}-\frac{t}{\bar{D}(1+t)},\\
\delta_{1}&=\sqrt{\frac{4}{\pi}}\exp\left(-\frac{\mu^{2}}{G^{2}}\right)\frac{\Pi}{G}.\end{split}
\end{equation}
where
\begin{equation}
G:=\sqrt{2}\sqrt{\alpha\frac{(\bar{D}+t)^{2}}{(1+t)^{2}}+\frac{\mu^{2}}{\bar{D}^{2}}\rho}.
\end{equation}
For the sake of simplicity we omitted the tilde above $\tilde{\mu}$
and $\tilde{\delta_{1}}$.

It is interesting to perform the limit $t\to+\infty$ on the original self-consistent
equations,
and after simple algebraic manipulation
we get the following self-consistent equations:

\begin{equation}\begin{split}
D&=\frac{\sqrt{4\alpha \delta_{1}+(\delta_{1}-1)^{2}}+\delta_{1}+1}{2(1-\alpha)},\\
\mu&=\frac{\bar{m}}{(1-\rho\frac{\alpha D}{1-D})},\\
\bar{m}&=\mathbb{E}_{\psi}\tanh\left(\frac{\beta}{D}\sqrt{\alpha\frac{\hat{q}-\mu^{2}(1+\rho)}{\alpha\left(1-\alpha\right)+\left(\frac{\alpha}{1-D}\right)^{2}}+\left(\frac{\mu}{D}\right)^{2}\rho}\psi+\frac{\beta\mu}{D}\right),\\
\hat{q}&=\mathbb{E}_{\psi}\tanh^{2}\left(\frac{\beta}{D}\sqrt{\alpha\frac{\hat{q}-\mu^{2}(1+\rho)}{\alpha\left(1-\alpha\right)+\left(\frac{\alpha}{1-D}\right)^{2}}+\left(\frac{\mu}{D}\right)^{2}\rho}\psi+\frac{\beta\mu}{D}\right),\\
\delta_{1}&=\beta(1-\hat q).
\end{split}
\end{equation}

Finally it is easy to see that the self-consistent equations in the
limit $\beta,t\to+\infty$ becomes

\begin{equation}\begin{split}
G=&\sqrt{2\frac{1-\mu^{2}(1+\rho)}{1-\alpha+\alpha\left(\frac{1}{1-\bar{D}}\right)^{2}}+2\left(\frac{\mu}{\bar{D}}\right)^{2}\rho},\\
\bar{D}&=\frac{\sqrt{4\alpha \delta_{1}+(\delta_{1}-1)^{2}}+\delta_{1}+1}{2(1-\alpha)},\\
\mu&=\frac{\bar{m}}{(1-\rho\frac{\alpha\bar{D}}{1-\bar{D}})},\\
\bar{m}&=\mathrm{erf}\left(\frac{\mu}{G}\right),\\
\delta_{1}&=\frac{2}{\sqrt{\pi}}\exp\left(-\frac{\mu^{2}}{G^{2}}\right)\frac{\bar{D}}{G}.
\end{split}
\end{equation}

\unappendix

\addcontentsline{toc}{section}{Bibliography}

\end{document}